%
%
%
%
%
%
%
%
%
%

\documentclass{article}



\usepackage[preprint]{neurips_2018}


\usepackage[utf8]{inputenc} 
\usepackage[T1]{fontenc}    
\usepackage{hyperref}       
\usepackage{url}            
\usepackage{booktabs}       
\usepackage{graphicx} 
\usepackage{amsfonts}       
\usepackage{nicefrac}       
\usepackage{microtype}      
\usepackage{amsmath}
\usepackage{amstext}
\usepackage{algorithm}
\usepackage[noend]{algpseudocode}
\usepackage{amsthm}

\usepackage{xcolor}

\newtheorem{theorem}{Theorem}
\newtheorem{defn}{Definition}

\newtheorem{lemma}{Lemma}

\newtheorem{corollary}{Corollary}

\newtheorem{proposition}{Proposition}

\usepackage{mathtools}

\newcommand{\eps}{\epsilon}

\newcommand{\bbP}{\mathbb{P}}
\newcommand{\bbE}{\mathbb{E}}
\newcommand{\expectation}{\mathbb{E}}

\mathtoolsset{centercolon}
\newcommand{\defeq}{\mathrel{:\mkern-0.25mu=}}
\newcommand{\eqdef}{\mathrel{=\mkern-0.25mu:}}

\title{Scalable Fair Division for `At Most One' Preferences}

%

\author{%
  Christian Kroer* \\
  Facebook Core Data Science
  \And
   Alexander Peysakhovich* \\
  Facebook AI Research
}

\begin{document}

\maketitle

\begin{abstract}
Allocating multiple scarce items across a set of individuals is an important practical problem. In the case of divisible goods and additive preferences a convex program can be used to find the solution that maximizes Nash welfare (MNW). The MNW solution is equivalent to finding the equilibrium of a market economy (aka. the competitive equilibrium from equal incomes, CEEI) and thus has good properties such as Pareto optimality, envy-freeness, and incentive compatibility in the large. Unfortunately, this equivalence (and nice properties) breaks down for general preference classes. Motivated by real world problems such as course allocation and recommender systems we study the case of additive `at most one' (AMO) preferences - individuals want at most 1 of each item and lotteries are allowed. We show that in this case the MNW solution is still a convex program and importantly is a CEEI solution when the instance gets large but has a `low rank' structure. Thus a polynomial time algorithm can be used to scale CEEI (which is in general PPAD-hard) for AMO preferences. We examine whether the properties guaranteed in the limit hold approximately in finite samples using several real datasets.
\end{abstract}

%
%



%
%


\maketitle

\section{Introduction}
Many important practical problems involve distributing a finite set of items across a set of individuals in a `good' way. While the exact definition of `good' is hotly debated, typically there are several desiderata: the allocation should be efficient in some way, should be fair in some way, and should have good incentive properties for the individuals \citep{brams1996fair,roth2015gets,brandt2016handbook,milgrom2004putting}. In this paper we study allocation in an environment where preferences are additive, individuals want at most one of any given item, monetary transfers are not possible, and lotteries are allowed. Examples of such assignment problems in practice are course assignment at universities \citep{budish2012multi}, pilot-to-plane assignment for airlines, dividing up estates \citep{goldman2014spliddit}, and recommender systems for scarce goods. We show that when instances are large and `low rank' we can construct a strategy-proof mechanism that generates Pareto Optimal and envy-free allocations. This mechanism requires only solving a convex program (potentially via gradient methods) so it is scalable to extremely large assignment problems.

All item assignment mechanisms must trade off between total utility attained and various measures of fairness of the division. One distributionally fair allocation is the one which maximizes the product of the utilities (or the sum of the log utilities). This is called max Nash welfare (MNW \cite{nash1953two,binmore1986nash}). MNW has many other nice properties leading researchers to refer to MNW as an `unreasonably fair' allocation method \citep{caragiannis2016unreasonable}. MNW is employed at spliddit.org which allows individuals to fairly split items such as estates \citep{goldman2014spliddit}. 

Another popular assignment method is competitive equilibrium from equal incomes (CEEI \cite{varian1974efficiency,budish2011combinatorial}). In CEEI individuals are given a budget of faux currency and prices for items are computed such that the total demand for each item (given the budgets) equals the supply. Because CEEI is based on a market equilibrium, we know that CEEI allocations will be Pareto efficient (nobody can be made better off without making someone worse off), envy-free (nobody prefers anyone else's allocation), and strategy-proof in the large (when individuals report their preferences to the CEEI mechanism, they have no incentive to lie). Due to these attractive properties CEEI is employed in the CourseMatch system which is used to assign course slots to individuals at universities \citep{budish2016course}. A downside of the CEEI mechanism is that solving for market equilibrium with general preferences is extremely difficult and even in relatively small instances the CourseMatch system needs to ``use heuristic search that solves billions of Mixed-Integer Programs to output an approximate competitive equilibrium.'' \citep{budish2016course}. Indeed, computation of approximate CEEI in general is known to be PPAD hard \citep{othman2016complexity}.

A surprising coincidence occurs when items are divisible and preferences are additive. Here the MNW solution can be found by solving a convex optimization problem \textit{and} the MNW allocations happens to coincide with a CEEI allocation \citep{eisenberg1959consensus}. This program is often called the Eisenberg-Gale (EG) convex program. This means that MNW solutions inherit the desirable properties above, and also that a CEEI can be computed extremely efficiently.

However, even if there is no complementarity between items the linear allocation rule is restrictive. For example, in the case of course allocation we do not want to assign an individual $5.6$ seats in a course. Thus, in this paper we consider the problem of at most one (AMO) preferences. We assume that preferences for items are additive but that individuals do not desire more than one of an item.\footnote{For ease of notation we consider individuals only wanting at most one of each item, but most of our results carry over to the case where `buyer capacity constraints' hold across items - for example there may be $10$ different brands of vacuum cleaners which individuals may value differently, however still want at most $K$ vacuum cleaners total. In the Appendix we describe this in more detail but keep the AMO constraint in the main text for ease of exposition.} We also assume that lotteries are allowed and so item allocations are now real numbers between $0$ and $1.$

Our main theoretical contribution is to show that when the underlying valuations have a low-rank structure and the instance is large the convex program corresponding to AMO-MNW maximization also yields an allocation which is an approximate CEEI. This can be interpreted either as saying that large scale AMO-MNW has the good properties of CEEI or that approximate large scale linear-AMO-CEEI can be computed in polynomial time.

We then investigate whether good properties of large scale, low-rank AMO-MNW hold in real datasets. We consider two markets, one for course allocation at Harvard Business School \citep{budish2012multi} and one using a recommender system \citep{harper2016movielens}. We show that in these markets AMO-MNW produces good solutions at a great speedup over standard CEEI implementations. Finally, we investigate incentive properties of AMO-MNW by using Bayesian Optimization to search for profitable deviations from truthful revelation. We find that in AMO-MNW deviations are fairly difficult to find and even when found produce only marginal improvements over truthful revelations.
\section{Related Work}
There is a large literature on computing market equilibria in Fisher markets using convex programming \citep{eisenberg1959consensus,shmyrev2009algorithm,cole2017convex} or gradient-based methods \citep{birnbaum2011distributed,nesterov2018computation}. There is also work extending the EG program to new settings such as quasi-linear utilities and indivisible items \citep{cole2017convex,cole2018approximating,caragiannis2016unreasonable}. Our paper complements this existing work as we extend the EG program to the case of AMO preferences and show that when markets are large and low rank this again continues to construct a market equilibrium allocation.

An important real world usage of multi-item allocation is online advertising where ad slots are assigned to individuals via an auction mechanism. This problem is known to be hard in large multi-unit, multi-bidder cases especially with budgets \citep{borgs2005multi}. It is also known that there is an equivalence between paced dynamic auctions and the equilibria resulting from solving the EG problem \citep{conitzer2019pacing}. Our paper shows that EG equilibrium can be used as a flexible tool for such allocation problems with AMO preferences. Note that our results here are for the `fake money' setting but they can be extended to the case where money is real (aka. quasi-linear settings) easily~\citep{cole2017convex}. 

\section{Theory}
We consider the setting of $N$ \text{individuals} with generic element $i$ and $M$ \textbf{items} with generic element $j$. Each item has a finite supply $s_j$ with $s_{max}$ being the maximum supply of an item. Individual $i$ has utility $v_{ij}$ for item $j$. We let $x$ denote an $N \times M$ matrix which we call the \textbf{allocation} of items to individuals with $x_i$ denoting a vector of how much of each item individual $i$ receives.

The main question is what is a good $x$ to use. One answer given by the literature is to maximize the product of utilities - equivalently, the sum of the log utilities. This is known as the max Nash welfare (MNW) solution. It is the solution to the following constrained optimization problem:

\[ 
\max_{x\geq 0} \sum_{i} \text{log} (u_i(x_i)) \text{ s.t. } \forall j \sum_{i} x_{ij} \leq s_j 
\]

When items are divisible and utilities are linear (i.e. when $u_i(x_i) = x_i v_{i} = \sum_{j} x_{ij} v_{ij}$) the problem above is a convex program known as the Eisenberg-Gale (EG) convex program and can be solved efficiently \citep{eisenberg1959consensus}.

Surprisingly, the EG program has an economic interpretation. Starting with the setup above we can consider the case where each individual is given a budget of $1$ unit of faux-currency. Each item is assigned a price $p_j$ and given these prices we define:

\begin{defn} 
Buyer $i$'s \text{demand} is the set of utility maximizing allocations $$D_i (p) = \lbrace x_i \mid v_i x_i \geq v_i x'_i \forall x'_i \text{ and } x_{ij} \leq s_j \rbrace.$$ Note that demand may be set valued but all $x_i \in D_i (p)$ achieve the same utility level. We let this be $\bar{U}_i (p).$
\end{defn}

\begin{defn}
An allocation, price pair ($x^*, p^*$) is a \textbf{market equilibrium} if supply meets demand:
\begin{enumerate} 
\item $x^*_i \in D_i (p^*)$ for all buyers $i$
\item $\sum_{i} x^*_{ij} = s_j$ for all items $j$
\end{enumerate}
\end{defn}

A deep connection between the convex program above and the concept of market equilibrium comes from the following equivalence. Solve the convex program and let  $x^{EG}$  be the solution. Let $\lambda^{EG}$ be the $M$ Lagrange multipliers at the optimum (one per item). It can be shown that $(x^{EG}, \lambda^{EG})$ form a market equilibrium.

This connection tells us a lot about properties of the EG solution. We know that any CEEI is envy-free (no individual prefers another's bundle to their own), is Pareto Optimal (nobody can be made better without someone else being made worse off). In addition, market equilibrium can be used as an allocative mechanism - individuals are asked to report their valuations, an equilibrium is computed, and the resulting allocation is enacted. This mechanism is incentive compatible in the large - i.e. when markets get large, no individual has an incentive to lie about their true preferences.

\subsection{At Most One EG}
\label{sec:amo}
The simple EG solution above is not applicable to many real world problems. This is for two reasons, first, many real world goods are indivisible. However, a fractional allocation can always be thought of as a lottery and resolved at the end of the mechanism. A bigger issue is that in many real world applications such as course allocation, recommender systems, or advertising slots the individuals do not want more than $1$ of any single item. We refer to this as the \textbf{at most one} (AMO) setting.

We can rewrite our slate constrained problem (which we refer to as the AMO Eisenberg-Gale or AMO-EG) as:

\[ 
\max_{x \geq 0} \sum_{i} \text{log} (x_i v_i) \text{ s.t. } \forall j \sum_{i} x_{ij} \leq s_j \text{ and } x_{ij} \leq 1
\]

Unfortunately, the solution to the AMO-EG program no longer has the nice properties of the original EG (although given that it is a maximization problem the resulting allocation remains Pareto optimal). 

To see this, consider the example of $2$ individuals and $2$ items. There is supply $2$ of item $1$ and supply $1$ of item $2$. Both value item $1$ at $1$, individual $1$ values item $2$ at $1$ and individual $2$ values it at $100.$ In the AMO-EG solution here both individuals get $1$ of item $1$ due to the constraint and split item $2$. However, individual $2$ receives more of item $2$ than individual $1$, thus individual $1$ must envy individual $2$. Since the solution exhibits envy, it cannot be a CEEI solution.\footnote{Thanks to Nisarg Shah for providing this example.}

To see where the original EG and CEEI equivalence breaks down, notice that unlike the standard EG problem, the AMO-EG problem now has constraints both at the item level ($M$ supply constraints which we refer to as $\lambda^{AMO}$) and the buyer-item level ($N \times M$ AMO constraints). Due to the presence of buyer-item specific multipliers the $M$ supply-Lagrange multipliers at the optimum may no longer be thought of as prices supporting the solution $x^{AMO}$. 

However, we now give a condition for when they approximately support $x^{AMO}$ as an equilibrium allocation.

\begin{defn}
Let $(x^*, p^*)$ be an allocation, price pair. We say it is a \textbf{$\delta$-approximate CEEI} if for all buyers $i$ we have $$v_i \cdot x_i^* + \delta \geq \bar{U}_i (p^*).$$ In other words, that $x_i^{*}$ is approximately a demand vector for buyer $i$ under the prices.
\end{defn}

We show that when every buyer has a sufficiently-large set of ``twins'' then they receive approximate demands in AMO-EG. Note that in the AMO setting buyers must take into account the AMO constraint when computing demands.

\begin{theorem}\label{maintheorem}
Individual $i$ and $j$ are $\epsilon$-twins if $|| v_i - v_j ||_{\infty} \leq \epsilon.$ For any $\delta>0$ there exists $\epsilon>0$ such that if everyone has $s_{max}+1$ $\epsilon$-twins, then $(x_{AMO}, \lambda_{AMO})$ is a $\delta$-approximate CEEI.
\end{theorem}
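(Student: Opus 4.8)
The plan is to work directly from the KKT (stationarity and complementary-slackness) conditions of the AMO-EG convex program and show that they reduce, up to an error governed by the ``at most one'' multipliers, to the demand-optimality conditions at prices $\lambda_{AMO}$. Introduce one multiplier $\lambda_j$ for each supply constraint $\sum_i x_{ij} \le s_j$, one multiplier $\mu_{ij} \ge 0$ for each cap $x_{ij} \le 1$, and one multiplier $\nu_{ij}\ge 0$ for $x_{ij}\ge 0$. Writing $u_i = v_i \cdot x_i^{AMO}$ for the utility of buyer $i$ at the optimum, stationarity gives $v_{ij}/u_i = \lambda_j + \mu_{ij} - \nu_{ij}$ for every $i,j$, together with the usual complementary slackness. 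Multiplying by $x_{ij}^{AMO}$, summing over $j$, and using complementary slackness yields the budget identity $\lambda_{AMO}\cdot x_i^{AMO} = 1 - \sum_j \mu_{ij}$, so the only way a buyer fails to exhaust the unit budget is through binding cap constraints.

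Next I would prove a clean, twin-free reduction: for \emph{any} buyer, the deviation from demand is controlled by $\sum_j \mu_{ij}$. Indeed, for any feasible bundle $x_i'$ with $0\le x_{ij}'\le 1$ and $\lambda_{AMO}\cdot x_i' \le 1$, substituting $v_{ij}=u_i(\lambda_j+\mu_{ij}-\nu_{ij})$ gives
\[
v_i\cdot x_i' = u_i\sum_j(\lambda_j+\mu_{ij}-\nu_{ij})x_{ij}' \le u_i\Bigl(1+\sum_j\mu_{ij}\Bigr),
\]
since $\lambda_{AMO}\cdot x_i'\le 1$, $x_{ij}'\le 1$, and $\nu_{ij}x_{ij}'\ge 0$. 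Hence $\bar{U}_i(\lambda_{AMO}) - v_i\cdot x_i^{AMO} \le u_i\sum_j \mu_{ij}$, and the whole theorem reduces to showing that the twin hypothesis forces $\sum_j \mu_{ij}$ to be small.

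The heart of the argument is therefore to bound the cap multipliers using twins, in two steps. First, a \textbf{pigeonhole} step: a buyer $i$ together with its $s_{max}+1$ twins forms a cluster of more than $s_{max}$ individuals with nearly identical valuations, so for every item $j$ the total allocation to the cluster is at most $s_j \le s_{max}$, and at least one twin $k$ receives $x_{kj}^{AMO}<1$; by complementary slackness this gives $\mu_{kj}=0$ and hence $\lambda_j \ge v_{kj}/u_k$. Combined with $\mu_{ij} = v_{ij}/u_i - \lambda_j$ whenever $\mu_{ij}>0$ (so that $x_{ij}^{AMO}=1$ and the relevant slack twin is distinct from $i$), we obtain $\mu_{ij} \le v_{ij}/u_i - v_{kj}/u_k$. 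Second, a \textbf{utility-equalization} step: since $i$ and $k$ are twins, replacing both allocations by their average is feasible (the AMO and supply constraints are preserved) and, by strict concavity of $\log$, optimality forces $(\sqrt{u_i}-\sqrt{u_k})^2 = O(\epsilon M)$, i.e. $|u_i-u_k| = O(\sqrt{\epsilon})$ up to instance constants. Together with $\|v_i-v_k\|_\infty = O(\epsilon)$ and a positive lower bound on utilities at the MNW optimum, this makes each term $v_{ij}/u_i - v_{kj}/u_k$, and hence $\mu_{ij}$, of order $\sqrt{\epsilon}$; summing over the at most $M$ capped items gives $\sum_j\mu_{ij}=O(M\sqrt{\epsilon})\to 0$. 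Plugging into the reduction bounds $\bar{U}_i(\lambda_{AMO})-v_i\cdot x_i^{AMO}$ by a quantity that vanishes with $\epsilon$, so choosing $\epsilon$ small enough yields the $\delta$-approximate CEEI guarantee.

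I expect the main obstacle to be the utility-equalization step and its quantitative control: establishing that twins receive almost equal utility at the optimum (the averaging/concavity argument), securing a uniform positive lower bound $u_i \ge u_{\min}>0$ so that dividing by utilities is safe, and propagating these $O(\epsilon)$ and $O(\sqrt{\epsilon})$ errors cleanly through the ratio $v_{ij}/u_i - v_{kj}/u_k$. The KKT manipulations (the budget identity and the demand-gap reduction) are routine by comparison; the delicate part is turning ``approximately equal valuations'' into ``approximately equal utilities and prices'' with explicit dependence on $\epsilon$.
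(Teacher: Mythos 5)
Your proposal is correct, and its skeleton coincides with the paper's own argument: the paper likewise works from the KKT conditions of the AMO-EG program (its cap multipliers $\lambda_{ik}$ are your $\mu_{ij}$, and its utility prices $\beta_i = B_i/(v_i\cdot x_i)$ are your $1/u_i$), derives the same budget identity \eqref{eq:budget spent}, reduces the demand gap to the budget absorbed by the cap multipliers (Corollary~\ref{cor:regret bound} bounds regret by $\beta_i^{-1}\sum_j x_{ij}\lambda_{ik_j}$, which in the AMO case is exactly your $u_i \sum_j \mu_{ij}$), and uses the same pigeonhole over the $s_{max}+1$ twins to find, for each item, a slack twin whose cap multiplier vanishes, yielding your bound $\mu_{ij} \leq v_{ij}/u_i - v_{kj}/u_k$, which is precisely Theorem~\ref{thm:approx twin approx ceei}. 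Where you genuinely diverge is in the key lemma that twins have close utility prices. The paper's Lemma~\ref{lem:beta bounds} proves this via a budget-exhaustion exchange argument --- if $\beta_i v_{ij} > \beta_{i'} v_{i'j}$ on every item that $i'$ wins, then $i$ would outspend their budget, so some item must flip the comparison --- giving the multiplicative bound $\epsilon^{\downarrow}\beta_{i'} \leq \beta_i \leq \epsilon^{\uparrow}\beta_{i'}$ with additive error linear in $\epsilon$. You instead average the two twins' allocations, verify feasibility, and invoke concavity of $\log$, obtaining $(\sqrt{u_i}-\sqrt{u_k})^2 = O(\epsilon)$ and hence $|u_i - u_k| = O(\sqrt{\epsilon})$. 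Your variational argument is more elementary and self-contained than the paper's auction-style exchange reasoning, and it fully suffices for the qualitative ``for every $\delta$ there exists $\epsilon$'' statement; the paper's version buys a sharper $O(\epsilon)$ rate, which feeds its explicit quantitative bound in the appendix, and its $\beta$-based bidding formulation is reused for the monotonicity and SP-L machinery (Propositions~\ref{prop:beta monotone} and~\ref{prop:eg beta maximal}), which your route would not directly support. Two bookkeeping notes: the utility lower bound you rightly flag as the delicate point is supplied in the paper by Lemma~\ref{lem:beta ub} under the standing assumption $v^{\downarrow} = \min_{ij} v_{ij} > 0$ (for a fixed instance your $u_{\min}>0$ also follows because the MNW objective is $-\infty$ at any zero utility), and your sum over capped items can be tightened from $M$ to $\|x_i\|_1 \leq u_i^{\uparrow}/v^{\downarrow}$ via Lemma~\ref{lem:bounded total allocation}; in both proofs $\epsilon$ depends on such instance constants, consistent with how the theorem is to be read.
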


The Theorem here is written that for any $\delta$ there exists an $\epsilon.$ In fact, we can derive a much stronger result: given an instantiation of individuals, items, valuations, supplies, and budgets, we can explicitly bound how $\delta$-approximate of a CEEI the AMO-EG solution is by considering the largest $\epsilon$ such that each individual has an $\epsilon$ twin. This bound is quite complex and requires the introduction of a lot of notation, thus we relegate it to the Appendix. In addition, this bound lets us derive a weaker, but harder to interpret condition for $\delta$-approximate CEEI which implies the one in Theorem \ref{maintheorem} above. We provide a more general theorem which implies Theorem \ref{maintheorem} in the Appendix.

Here we give a sketch of the intuition for Theorem \ref{maintheorem}. First, recall that the Lagrange multipliers of a convex program measure the shadow value of locally relaxing a constraint. Consider a group of $s_{max} + 1$ exact twins with identical valuation functions. In any allocation the AMO constraints for some item $j$ bind for at most $s_{max}$ of them. However, the last twin, by assumption, is unconstrained. This means that for that item $j$ the AMO Lagrange multipliers are $0$ at $x^{AMO}$ since the last twin is unconstrained and all twins contribute equally to the convex program's objective. It can then be shown, just as in the standard EG program, that the remaining non-zero Lagrange multipliers (the $M$ supply constraints) can again be thought of as prices. Looking at approximate instead of exact twins is slightly more complex but follows a related logic.

Approximate CEEI is attractive because it approximately retains the nice properties of CEEI. When receiving an item in their demand, a given individual $i$ has no envy (since they can afford any bundle received by some other individual) and receives their proportional share (since prices sum to less than $N$). It follows that in $\delta$-approximate CEEI every individual has at most $\delta$ envy, and receives all but $\delta$ of their proportional share.

\subsection{Low Rank Markets}
We now ask: under what generating process for a large market is the approximate twin condition satisfied?

We propose a simple market generating process that is a model of many real world markets. We look at \textit{low rank markets} where every individual $i$ has a latent vector $\theta_i$ drawn from some set $\Theta$ which is a compact subset of $\mathbb{R}^d_+$. Every item has a vector $\psi$ from some set $\Psi$ which is also some compact subset of $\mathbb{R}^d_+$ with $d$ fixed. The supply of each item is some constant $s>1$. The valuation of an individual is $v_{ij} = \theta_i \cdot \psi_j$. This states that item valuations are not independent and valuation of one item can be at least partially predicted from valuations of other items. This is true in most real world recommendation settings and is exploited in most recommender systems. 

Consider growing a low rank market by drawing $N$ individuals from some continuous distribution on $\Theta$ and drawing $kN$ items from $\Psi.$

\begin{theorem}
  \label{maintheorem probabilistic}
For any $\epsilon, \delta > 0$ there exists $\bar{N}$ such that if $N > \bar{N}$ the probability that $(x_{AMO}, \lambda_{AMO})$ is an $\delta$-approximate CEEI is greater than $1-\delta$.
\end{theorem}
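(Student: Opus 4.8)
The plan is to reduce the probabilistic statement to the deterministic twin condition of Theorem~\ref{maintheorem} and then show that this condition holds with high probability as the market grows. Fix the target $\delta > 0$. By Theorem~\ref{maintheorem} there is a tolerance $\epsilon_0 = \epsilon_0(\delta) > 0$ such that whenever every individual has $s_{max}+1$ others who are $\epsilon_0$-twins, the pair $(x_{AMO}, \lambda_{AMO})$ is a $\delta$-approximate CEEI (here $s_{max} = s$, the common supply). It therefore suffices to prove that, for $N$ large, the realized latent vectors $\theta_1, \dots, \theta_N$ induce the $(s_{max}+1)$-fold $\epsilon_0$-twin property with probability at least $1-\delta$.

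First I would translate twinship in valuation space into proximity in latent space. Since $v_{ij} = \theta_i \cdot \psi_j$ and $\Psi$ is compact, $B \defeq \sup_{\psi \in \Psi} \|\psi\| < \infty$, so for any two individuals
\[
\|v_i - v_{i'}\|_\infty = \max_j |(\theta_i - \theta_{i'}) \cdot \psi_j| \le B \, \|\theta_i - \theta_{i'}\|.
\]
Hence $\|\theta_i - \theta_{i'}\| \le r \defeq \epsilon_0 / B$ guarantees that $i$ and $i'$ are $\epsilon_0$-twins, \emph{uniformly over every possible realization of the $kN$ items}. This is the point where the low-rank (bounded-dimension, compact-parameter) structure does the work: item draws cannot spoil twinship, so the problem collapses to a pure occupancy question about the $\theta_i$.

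Next I would exploit compactness of $\Theta$ to partition it into finitely many cells $C_1, \dots, C_K$, each of diameter at most $r$; any two individuals landing in a common cell are automatically $\epsilon_0$-twins. Let $\mu$ denote the generating law on $\Theta$. Discarding the finitely many cells of $\mu$-measure zero — almost surely no sample lands in them — let $q_{\min}$ be the smallest \emph{positive} mass $\mu(C_\ell)$; finiteness of the partition is exactly what makes $q_{\min} > 0$, even for distributions whose density approaches zero. Each positive-mass cell receives a $\mathrm{Binomial}(N, \mu(C_\ell))$ number of points with mean at least $N q_{\min}$, so a Chernoff bound gives $\bbP(N_\ell \le s_{max}+1) \le e^{-cN}$ for some $c = c(q_{\min}, s_{max}) > 0$. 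A union bound over the at most $K$ positive-mass cells shows that, off an event of probability at most $K e^{-cN}$, every occupied cell holds at least $s_{max}+2$ points. On that complement event each individual shares its cell with at least $s_{max}+1$ others, all $\epsilon_0$-twins, so the hypothesis of Theorem~\ref{maintheorem} is met. Choosing $\bar N$ so that $K e^{-c \bar N} < \delta$ completes the argument.

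The main obstacle is that the twin property must hold for \emph{every} individual at once, including those drawn into the thinly populated parts of $\Theta$; a naive pointwise density bound can degrade arbitrarily near the boundary or in low-density regions. The finite-cover reduction is what tames this, replacing pointwise density by the minimum mass over finitely many cells and converting a potentially nonuniform statement into a single exponential tail controlled by $q_{\min}$. Care is also needed that the same $\bar N$ works regardless of the item realizations, which is handled by the uniform bound in the second step, and that the count inequality uses $s_{max}+2$ points (self plus $s_{max}+1$ twins) rather than $s_{max}+1$.
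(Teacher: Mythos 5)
Your proposal is correct and takes essentially the same route as the paper's proof: reduce to the twin condition of Theorem~\ref{maintheorem}, discretize the compact latent space $\Theta$ into finitely many cells whose diameter guarantees $\epsilon_0$-twinship, and apply a Chernoff/Hoeffding concentration bound with a union bound over cells (the paper's Lemma~\ref{lem:multinomial}) to obtain the twin condition with exponentially high probability for large $N$. Your write-up is in fact somewhat more careful than the paper's on a few points --- the explicit uniform bound $\|v_i - v_{i'}\|_\infty \le B\,\|\theta_i - \theta_{i'}\|$ over all item realizations, the handling of zero-mass cells, the $s_{max}+2$ occupancy count, and the correct $1-\delta$ conclusion --- but these are refinements of the same argument, not a different approach.
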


Thus, in large, low-rank markets AMO-EG is a convex program that finds an approximate CEEI.

Low rank markets are a natural model under which our approximate-twin conditions are satisfies. However, there are also many other market models one could construct where these same conditions hold. A commonly used example is \emph{replicator markets}, where we start with some finite base economy, and the market grows by replicating the set of individuals and items. Here every individual gets $k$ exact twins, where $k$ is the number of market replicates. Assuming that the AMO constraint is local to each market replicate this immediately implies the condition of Theorem \ref{maintheorem}.

\subsection{Strategy Proofness for AMO-EG}
An important part of any mechanism is individuals reporting their valuations. If reporting truthfully yields a higher utility than lying a mechanism is known as \textit{strategy proof}. While CEEI is not strategy proof in general, it is known that it becomes \emph{strategy proof in the large} (SP-L). 

We now extend some known results \citep{azevedo2018strategy,peysakhovich2019fair} to our setting to show that AMO-EG is SP-L in low rank markets. Note that the low rank model differs from the results in \cite{azevedo2018strategy} as it assumes an infinite number of outcomes (since allocations are continuous). It also differs from the related SP-L results in \citet{peysakhovich2019fair} as it assumes a growing number of items.

We consider the definition of SP-L introduced by \citet{azevedo2018strategy} and used in \citet{peysakhovich2019fair}. Let $\sigma$ be a mapping (not necessarily truthful) from valuations to reports. Suppose that $\sigma$ has full range (i.e. for any report, there is a type that gives that report with some probability). Consider an individual $i$ who can either report their true valuations $v_i$ or some other valuation $v'_i$ with everyone else reporting according to $\sigma$. Consider $i$'s expected utility with the expectation taken relative to the $n-1$ other individuals behaving according to $\sigma$ with their true types drawn from $\mathcal{F}$.

\begin{defn}
 A mechanism is SP-L if, given $\sigma$ and any $\epsilon > 0$ there exists a $\bar{N}$ such that if $N > \bar{N}$ then the gain from reporting any $v'_i$ instead of their true valuation $v_i$ gains is at most $\epsilon$.
\end{defn}

\begin{theorem}
  \label{thm:spl}
The AMO-EG mechanism is SP-L for low-rank markets.
\end{theorem}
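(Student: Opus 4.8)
The plan is to establish strategy-proofness in the large (SP-L) by connecting the AMO-EG mechanism to the approximate-CEEI guarantee already proved in Theorem \ref{maintheorem probabilistic}, and then invoking the standard SP-L argument for market equilibria. The key intuition is that SP-L follows from two ingredients: (i) in a large market, any single individual has negligible influence on the aggregate statistics (here, the prices $\lambda_{AMO}$) that determine their own allocation opportunities; and (ii) when an individual faces essentially fixed prices, reporting truthfully is a best response because the resulting allocation is (approximately) their utility-maximizing demand at those prices. First I would formalize (i) by arguing that because all other individuals report according to $\sigma$ with types drawn from $\mathcal{F}$, and because the low-rank structure forces a growing supply of $\epsilon$-twins, the realized price vector concentrates as $N \to \infty$ and is insensitive to the single deviating report $v_i'$. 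The continuity of the generating distribution on $\Theta$ together with the low-rank assumption (fixed dimension $d$, $kN$ items) is what guarantees the twin condition holds with high probability, so I would lean directly on Theorem \ref{maintheorem probabilistic} here.

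Next I would carry out (ii). Conditional on the (nearly fixed) price vector $p^*$, an individual who reports truthfully receives an allocation $x_i$ that, by the $\delta$-approximate CEEI conclusion, satisfies $v_i \cdot x_i + \delta \geq \bar U_i(p^*)$ — i.e.\ truthful reporting yields all but $\delta$ of the best attainable utility at those prices subject to the AMO and budget constraints. The crucial observation is that $\bar U_i(p^*)$ is an \emph{upper bound} on what \emph{any} report can achieve, because whatever $i$ reports, the mechanism still must hand $i$ a bundle that is feasible and, under approximately fixed prices, affordable within the unit budget. Since the deviating individual cannot move prices appreciably in the large market, the best they can hope for by misreporting is a bundle costing at most their budget at $p^*$, whose utility is at most $\bar U_i(p^*)$. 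Combining, the gain from any deviation $v_i'$ is at most $\delta$ plus the vanishing price-perturbation term, which can be driven below any prescribed $\epsilon$ by taking $N$ large.

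Concretely, the sequence of steps is: fix $\sigma$ with full range and fix target $\epsilon > 0$; choose $\delta$ (and the associated $\epsilon$-twin parameter) small enough via Theorem \ref{maintheorem probabilistic} that with probability at least $1-\delta$ the outcome is a $\delta$-approximate CEEI; bound the price movement caused by a unilateral deviation and show it is $O(1/N)$ so its utility impact vanishes; assemble the bound $(\text{gain}) \leq \delta + o(1)$; and finally pick $\bar N$ so that for $N > \bar N$ both the failure probability and the residual terms are absorbed into $\epsilon$. A small amount of care is needed to pass from the high-probability event to a bound on \emph{expected} utility (the SP-L definition is in expectation), which I would handle by bounding utilities on the low-probability failure event using compactness of $\Theta$ and $\Psi$ (so valuations, and hence utilities, are uniformly bounded), making the failure contribution $O(\delta)$ as well.

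The hard part will be step (ii)'s price-stability argument in the AMO setting: unlike plain EG, the AMO-EG program carries $N \times M$ buyer-item multipliers in addition to the $M$ supply multipliers, and one must verify that a single individual's misreport perturbs the supply-constraint multipliers $\lambda_{AMO}$ — which are what play the role of prices — by only a vanishing amount. This requires a sensitivity/stability analysis of the convex program's dual optimum under a rank-one perturbation of the valuation data, leveraging the twin abundance to argue that no individual is pivotal for the binding structure of the AMO constraints. Establishing this insensitivity rigorously (rather than appealing to it heuristically) is the main obstacle, and I expect it to reuse the same twin-based shadow-price argument sketched after Theorem \ref{maintheorem}, now applied to the \emph{stability} rather than the mere existence of the supporting prices.
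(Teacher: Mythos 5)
Your proposal has the right high-level architecture (price insensitivity of a single deviator, plus near-demand-optimality of the truthful allocation at those prices, plus an expectation decomposition over the twin-abundance event), and the last two ingredients do match the paper: the paper also splits the expectation via Lemma~\ref{lem:multinomial} and uses Corollary~\ref{cor:regret bound} together with the $\epsilon$-twin valuation closeness to show truthful reporting is approximately demand-optimal. But the crux --- the claim that a unilateral misreport moves the supporting prices only negligibly --- is exactly the step you defer (``the main obstacle''), and the route you sketch for it would not go through as stated. Generic sensitivity analysis of the dual optimum of a convex program under a rank-one data perturbation does not by itself yield a bound that vanishes with $N$: Lagrange multipliers can be quite unstable without regularity conditions, and nothing in that framing rules out the deviator \emph{lowering} prices on items they covet, which is precisely the profitable manipulation one must exclude. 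Your benchmark is also subtly off: you compare against the truthful-report prices $p^*$, which requires a two-sided stability bound, whereas a one-sided comparison suffices if one benchmarks against the right counterfactual.

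The paper's actual mechanism is structural, not perturbative. It reinterprets the utility prices $\beta_i = B_i/(v_i\cdot x_i)$ as pacing multipliers in a paced auction allocation, proves that budget-feasible pacing multipliers are closed under pointwise maximum (Proposition~\ref{prop:beta monotone}), and shows the AMO-EG solution is the unique pointwise-maximal BFPM (Proposition~\ref{prop:eg beta maximal}). Benchmarking against the market \emph{without} the deviator $i^*$, monotonicity forces $\alpha = \beta_{i^*}/\beta'_{i^*} \geq 1$ --- entry under any report can only raise utility prices --- which kills the price-lowering direction for free. The price-raising direction is then handled by a budget-accounting argument: every buyer in the $\Omega(N)$-sized twin set $\mathcal{I}$ exhausts their budget (by \eqref{eq:budget spent}), so the aggregate payment increase over $\mathcal{I}$, which scales like $(\alpha\epsilon^{\downarrow} - \epsilon^{\uparrow})\beta'_{i^*}|\mathcal{I}|kv^{\downarrow}$, must be financed by the deviator's single unit of budget, forcing $\alpha \leq \epsilon^{\uparrow}/\epsilon^{\downarrow} + O(1/|\mathcal{I}|)$, with $\epsilon^{\uparrow}/\epsilon^{\downarrow} \to 1$ as the bin $\Theta'$ shrinks. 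This is the idea your proposal is missing: twin abundance enters not merely to certify the approximate-CEEI property (your appeal to Theorem~\ref{maintheorem probabilistic}), but as the population whose collective budget constraint pins down the prices against any single manipulation. Without the monotonicity-plus-budget-counting argument, or some equally concrete substitute, your step (i) remains an unproven assertion and the proof is incomplete.
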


\subsection{Generality of these Results}
In the prior section we have shown results for AMO-EG. We have done this for ease of exposition and because the at-most-one constraint seems reasonable for many practical applications. Importantly, however, our result generalize to a more general market model. In the Appendix we prove the EG, market equilibrium equivalence for a more general model where items are partitioned into item-sets and individuals want at most one item from each item set (for example, in the case of household items there may be multiple vacuum cleaner brands but individuals may still only want one vacuum cleaner). The more general model also allows arbitrary budgets. AMO is a special case where each item set includes just a single item, and CEEI is a special case where all budgets are $1$. For the paper, we stick to the simpler AMO model as it is easier to discuss and gain intuition about.

\section{Experiments}

\subsection{Datasets}
We consider two datasets of allocation preferences. First, we use a synthetic market from a recommender system as used in \citep{kroer2019computing,peysakhovich2019fair}. To construct this market we take the MovieLens1M dataset in which ~6000 users rate subsets of ~4000 movies. We use the same procedure/hyperparamters from \citet{kroer2019computing} to construct estimates for unobserved user/movie pairs. We then take the top 200 users with the most ratings and the top 1000 movies with the most ratings and construct markets from this subset. We consider a growing set of markets by taking $\lbrace 50, 100, 200, 500, 1000 \rbrace$ items and holding the number of buyers fixed. We set the sum of supplies to be in $\lbrace 500, 1000, 2000 \rbrace.$ This procedure gives us 15 nested markets that grow as the conditions of Theorem \ref{maintheorem} suggest.

We also consider an allocation dataset from course allocation at Harvard Business school \citep{budish2012multi}. In this dataset there are $936$ students and $93$ courses with fixed supplies. Students rank their top $30$ courses. To transform these ranks into utilities we assume that a rank $r$ course gives utility $\frac{31-r}{30}$ and courses that receive no ranking have utility $0$ (recall that MNW is invariant to the scale of individual utilities so normalizing the top ranked course to have utility $1$ has no effect on our computation). Note that this dataset is not like the limit markets considered in Theorem \ref{maintheorem} as it has many buyers and relatively few items. 

\subsection{AMO-EG Allocations} 
We now turn to evaluating AMO-EG in our data. We solve for the equilibrium using a commercial convex programming solver that supports maximizing logarithmic utilities (MOSEK 9~\citep{dahl2019primal}), though at larger scales (e.g. millions) we would need to use a first-order method. We refer the reader to \cite{kroer2019computing} for specialized techniques for solving extremely large scale EG problems which are directly generalizable to our AMO-EG setting.

As a first cut, we will compare the EG solution to the AMO-EG solution. In particular, we will see how often the standard EG violates the AMO constraint. If EG solutions do not violate the AMO constraint then AMO-EG is trivially a market equilibrium and indeed can be used out of the box without our modifications. However, figure \ref{eg_violations} shows us this is not the case in any of our datasets. We count both the number of individuals that receive more than one of an item (number of violators) and the amount of items they receive conditional on receiving more than one (violation size).

The naive EG solution allocates almost every buyer in the MovieLens markets more than one of an item and often large multiples. In the course data (right panel) we see that EG allocates some individuals as many as 19 seats in a course. Thus, naive EG ignoring the AMO constraints yields poor solutions for AMO preferences and thus the AMO-EG program is needed.

\begin{figure}[h]
\begin{center}
\includegraphics[scale=.45]{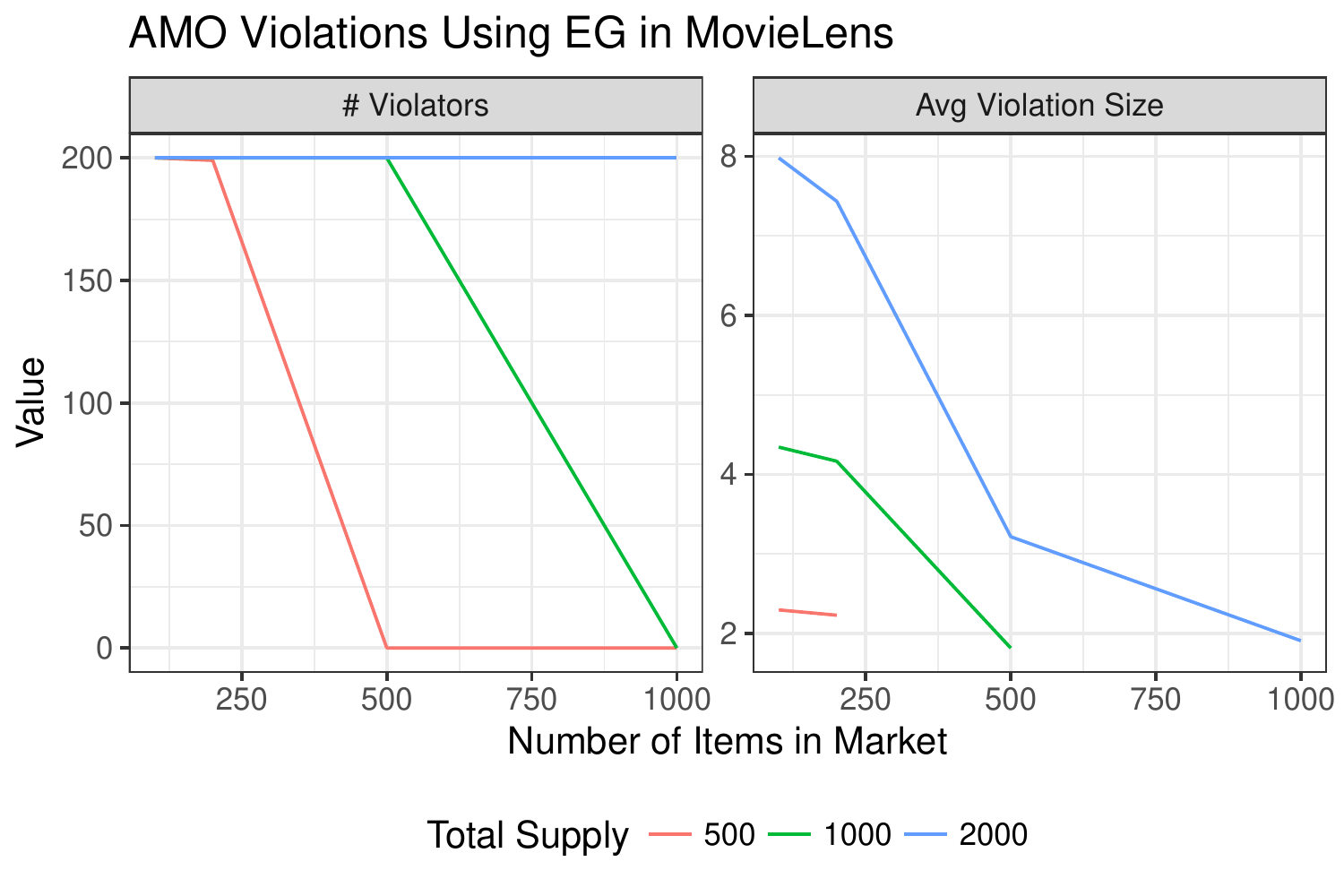}
\includegraphics[scale=.45]{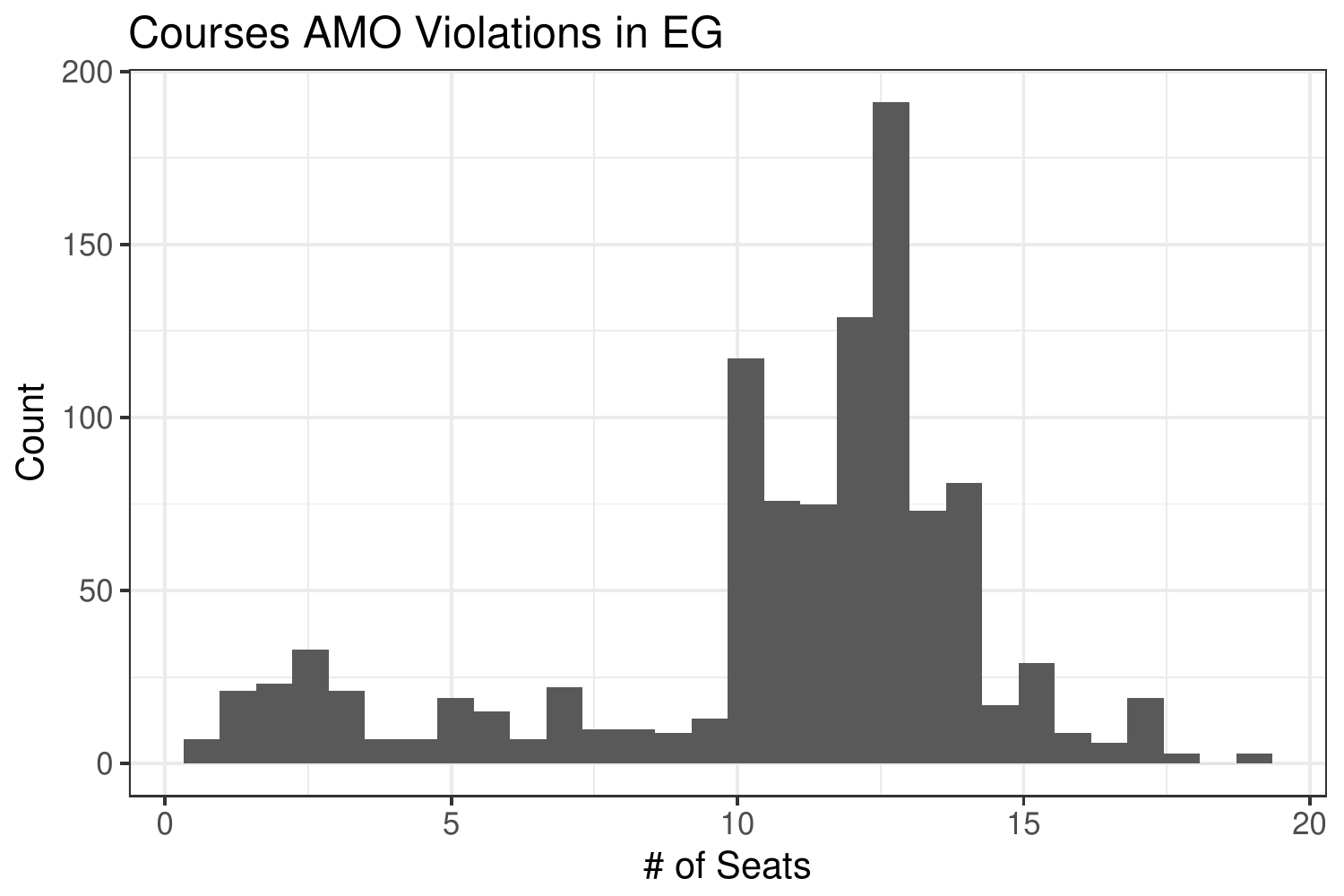}
\end{center}
\caption{Standard EG using linear preferences allocates multiple copies of items to individuals in all of our markets thus is not appropriate for AMO settings. In all markets almost every buyer ends up with an extreme allocation of more than one item.}
\label{eg_violations}
\end{figure}

\subsection{Computational Improvements}
One advantage of the AMO-EG procedure is that it is a result of a convex optimization problem. An approximate version of CEEI (aCEEI) is used in practice to match individuals to courses at leading business schools - the algorithm used for this task is known as CourseMatch (see \url{https://mba-inside.wharton.upenn.edu/course-match/} for a student-facing tutorial of the mechanism). 

The algorithm underlying it is described in \citet{budish2016course}. \cite{budish2011combinatorial} uses the same course data we do and describes the performance of the aCEEI algorithm as: ``the \cite{othman2010finding} computational procedure can solve problems ... 50 courses and schedules...in around 20 minutes in Matlab on a standard workstation. It can solve problems the size of a full year at HBS, in which there are roughly 100 courses and schedules....in around \textit{11 hours} in a more sophisticated computing environment.'' By contrast  the AMO-EG solution on a full year of HBS data can be computed using an industrial grade solver (MOSEK) on a commodity laptop in approximately \textit{15 seconds}.\footnote{Of course, aCEEI allows for more general preferences over bundles than our simple additive ones. However existing benchmarks for aCEEI such as \citet{budish2011combinatorial} already make the assumption that ``(A1) preferences are additively-separable; and (A2) students care about the average rank of the courses they receive'', moreover the same author reports that ``preliminary exploration of preferences more complex than average-rank suggest that all of the reported results are robust'' so actually the linearity assumption is both commonly used and not as serious as it may seem.}

\subsection{Use of Lotteries}
In order to make the allocation space well behaved, AMO-EG uses continuous allocations. Since we use AMO preferences, these can be thought of as lotteries over pure allocations. An important question is whether this convexification leads to very random allocations (many $x_{ij} \in (0,1)$) or whether the resulting allocations are mostly pure. We plot these for all of our markets in Figure \ref{integrality} and see that AMO-EG allocations are mostly pure in practice.

\begin{figure}[h]
\begin{center}
\includegraphics[scale=.42]{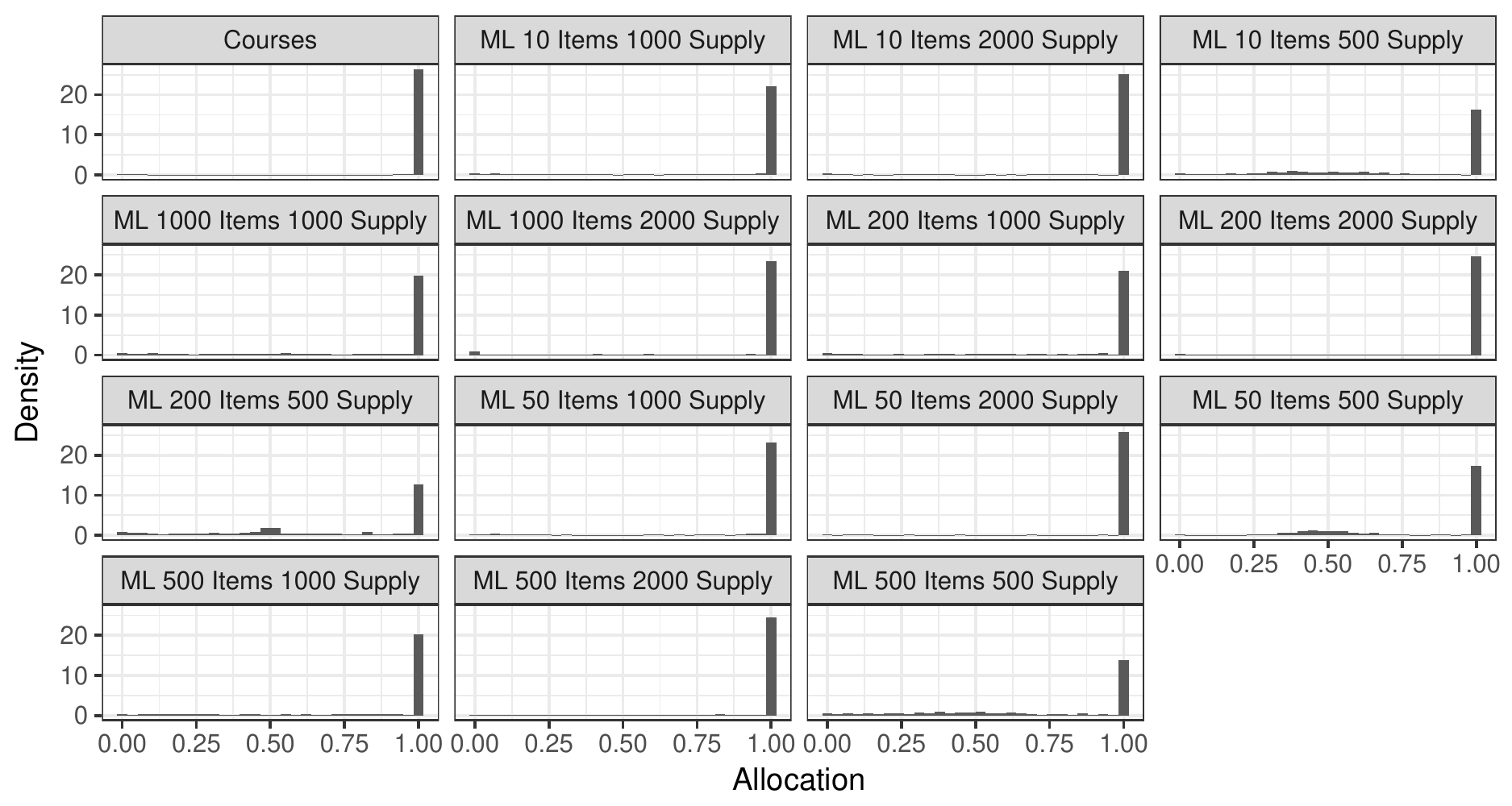}
\end{center}
\caption{Though AMO-EG is allowed to use continuous allocations (lotteries) we see that the resulting allocations are mostly pure across all of our markets. Density plots show density of allocations removing all allocations that are less than $.0001$ (these result from numerical accuracy being on the order of $1e-5$).}
\label{integrality}
\end{figure}

\subsection{Envy and Regret}
We now ask whether the AMO-EG has good properties. We start by considering envy and regret.

First, we define envy for the AMO-EG solution. Let $x^{AMO}$ be the AMO-EG allocation. We compute the envy of individual $i$ as $$\max_{i' \in N} v_i (x^{AMO}_{i'} - x^{AMO}_i).$$ Our measure of envy will be this averaged across all individuals. Equilibria are envy-free (and indeed envy-freeness is an important reason why CEEI is a compelling mechanism) so this is one measure of finite sample performance.

Recall that the EG convex program yielded an equilibrium by using the Lagrange multipliers for the supply constraints at the optimum as the prices that supported its allocation. AMO-EG also has Lagrange multipliers for these constraints (in addition to extra multipliers for each of the AMO constraints). We now ask whether the supply-constraint Lagrange multipliers can be interpreted as equilibrium prices.
We refer to these at the optimum as $p^{AMO}.$ We will take $p^{AMO}$ and for each individual compute their demand $D_i (p^{AMO})$. Recall that the demand may be a set but that all possible allocations in the demand must yield the same utility level. We refer to this as $\bar{u}_i (p^{AMO})$.  We will compute the normalized regret of an individual $i$ facing these prices:

\[ \text{PriceRegret}_i (x^{AMO}, p^{AMO}) = \dfrac{\bar{u} (p^{AMO} - v_i x^{AMO}_i)}{\bar{u}(p^{AMO})}. \]

If an individual $i$'s regret is $0$, it means that $x^{AMO}_i \in D_i (p^{AMO}).$ Note that this is an upper bound on regret, there may exist prices which have lower expected regret for the AMO allocation.

We find that AMO-EG has basically no envy in any of the markets we study (Figure \ref{amo_metrics}). By contrast we see that the Price Regret of AMO-EG goes away as markets grow in the fashion required by our theory. How important price regret is from a practical point of view is an open question since even in the CEEI mechanism the currency and prices are `fake'. Nevertheless, it is one possible measure of how close AMO-EG is to a market equilibrium and perhaps useful in a hybrid mechanism where the mechanism only sets prices and individuals are allowed to buy items as they wish.

\begin{figure}[h]
\begin{center}
\includegraphics[scale=.45]{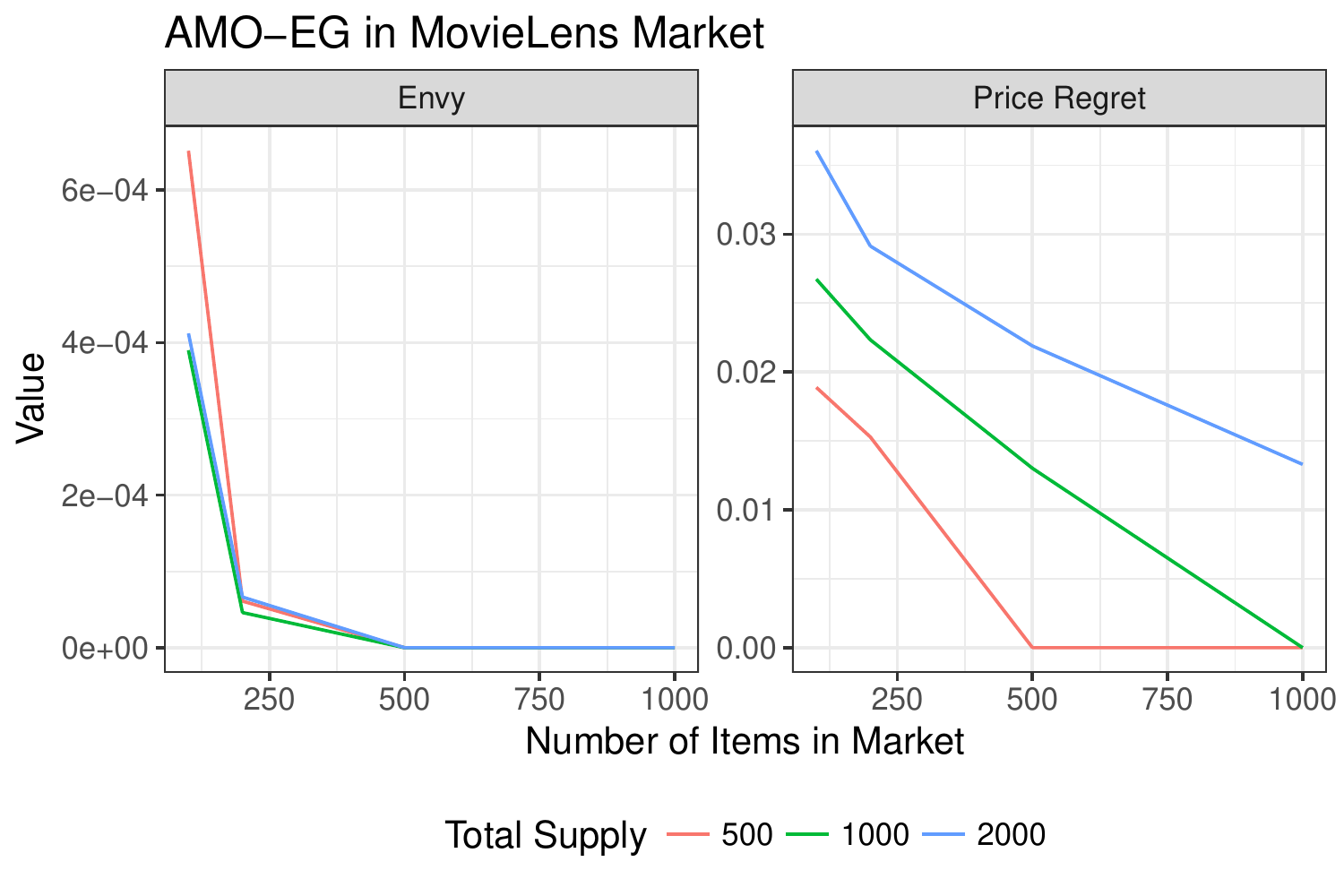}
\includegraphics[scale=.45]{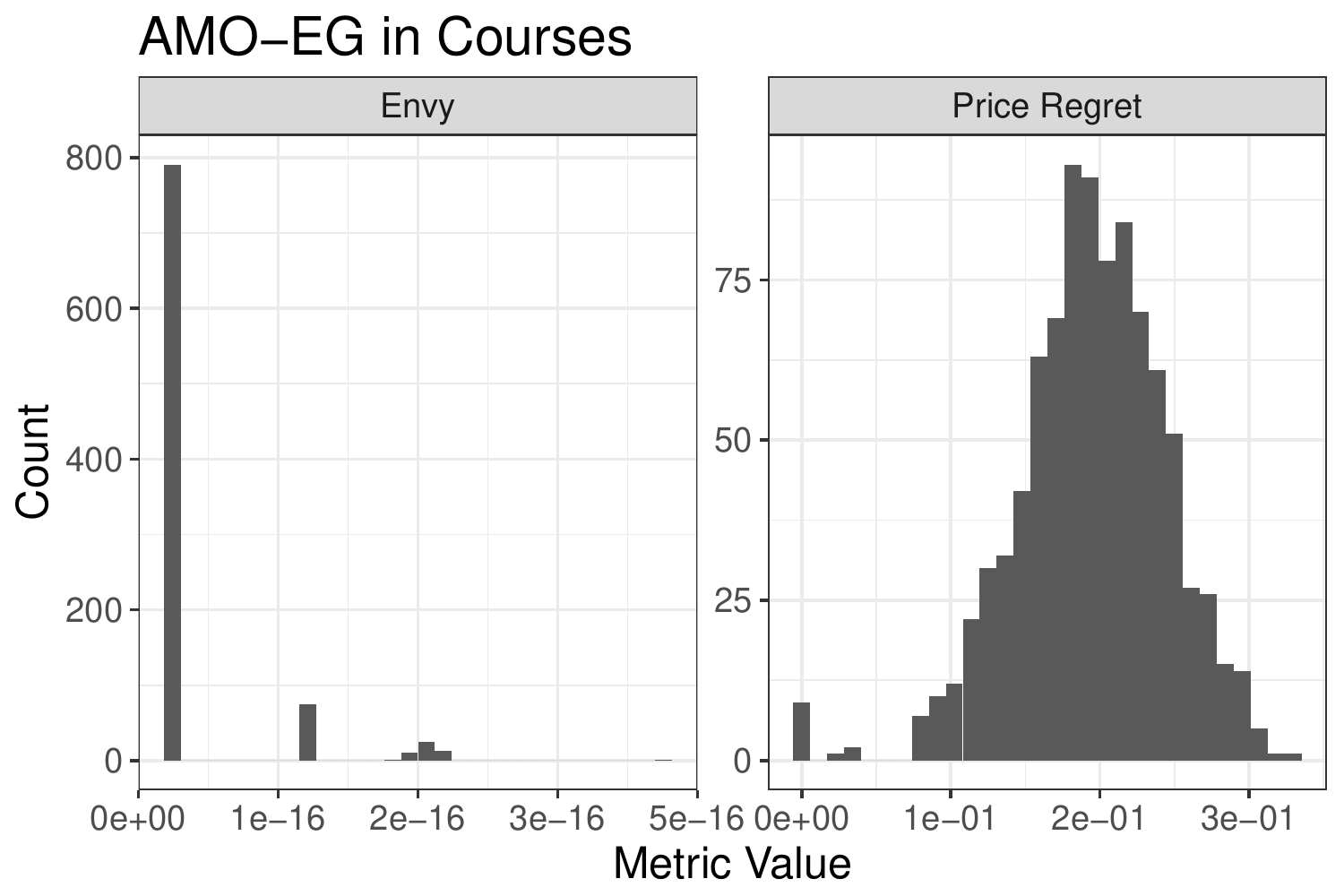}
\end{center}
\caption{AMO-EG yields no envy (up to numerical precision) in all of our markets. When we grow the MovieLens market as our Theorem requires (more items than buyers) we see that Lagrange multipliers from the AMO-EG program can be used as market equilibrium prices. On the other hand in the course allocation problem the prices implied by the Lagrange multipliers are not market clearing prices for the AMO-EG allocation yielding an average of a $20 \%$ regret.}
\label{amo_metrics}
\end{figure}

\subsection{Incentive Compatibility}
\begin{center}
\begin{table*}[h!]\label{ic_results}
\resizebox{\textwidth}{!}{%
 \begin{tabular}{|| c | c | c | c | c ||}
 \hline
 Market & EG LRD Found & AMO-EG LRD Found  & EG LRD Gain & AMO-EG LRD Gain  \\ [0.5ex] 
 \hline\hline
MovieLens (Worst) & .12 (.05)  & .16 (.05) & .007 (.01) & .025 (.02) \\
\hline
MovieLens (Best) & .20 (.05)  & .22 (.06) & .00005 (.0001) & .02 (.02) \\
\hline
Courses & .26 (.06)  & .05 (.03) & .00001 (.0001) & .04 (.03) \\
\hline
\end{tabular}}
\caption{Low rank deviations (LRD) are difficult to find in both AMO-EG and EG. However, we see that even when they are found, LRD lead to only small increases in attained utility for the deviating agents - reported gains are percent increases in utility relative to truthful revelation.} 
\end{table*}
\end{center}
Finally, an important practical reason for the use of CEEI is that it is incentive compatible. However, this property is only guaranteed as the market becomes large. We ask how incentive compatible AMO-EG is in our datasets. In particular, we will ask how much individuals can gain by misreporting their preferences given the fixed valuations of all other individuals. We will compare this to the ability of individuals to deviate in the EG using the same valuation matrix without the AMO constraints.

Ideally, we would like to solve the following problem for each individual. Let $(v'_{i}, v_{-i})$ denote a vector of reports for individual $i$ $(v'_i)$ and for everyone else $(v_{-i})$. Let $x^{AMO} (v'_{i}, v_{-i})$ denote the AMO-EG allocation from these reports. Suppose each individual $i$ has true valuation $v_i$. We would like to measure how much they can gain from reporting non-truthfully. However, computing this quantity is difficult in practice since it requires computing the optimal deviation. 

Since we are not guaranteed convexity of the deviation function, we need to perform a global search. The simplest solution is to discretize the type space and construct a grid-search. However, in high dimensions this quickly becomes intractable. An alternative is to use Bayesian Optimization (BO) techniques which allow for sample efficient exploration of a parameter space \citep{shahriari2015taking}.

Standard BO works as follows: given a space $\Theta$ of possible parameters, we want to maximize a function $f(\theta)$. We sample $K$ points $\theta$ from $\Theta$ and fit a Gaussian Process (GP) model to the observed values $f(\theta).$ GPs give a posterior mean and uncertainty associated with any point $\theta$ in parameter space, so we can use the model uncertainty to search for an optimal next point to sample. We sample this next point, update the model and repeat. A full description of the details involved in each of the steps above is beyond this paper and a large literature addresses many common usecases. For the purposes of this paper we use an open source BO optimization platform deployed at industrial scale (e.g. at facebook) called Adaptive Experiment (Ax, \citep{bakshyae}). Ax is a high level abstraction for BoTorch (\url{https://github.com/pytorch/botorch}) which is itself an extension of the deep learning framework PyTorch \citep{paszke2017automatic}. We use the standard Ax's service API to construct the optimization procedure as detailed in the Ax tutorial here: \url{https://ax.dev/}. We use 50 trials per run with 3 random restarts per person. 

Unfortunately, even with the sample efficient exploration strategy given by BO, our parameter space is intractably large (a valuation for each item). We reduce the parameter space by exploiting the fact that in our data valuations for items are correlated and thus the full valuation matrix can be approximated by a lower rank matrix quite well. The valuation matrix $V$ can be well approximated by the matrix product $V \sim B K^t$ where $B$ is an $N \times d$ matrix of vectors for individuals and $K$ is an $M \times d$ matrix of vectors for items so $v_{ij}$ can be well approximated by $b_i k_j.$  We consider deviations by individuals only across these principal directions of variation that are fixed by the matrix $K.$

We allow deviations where individuals choose $b_i$ from a compact subset of $\mathbb{R}^d$ which then gives us a full deviation vector of the form $v'_{ij} = v_{ij} + b_i \cdot k_j.$ Here choosing $0^d$ corresponds to truth telling. We refer to these as low-rank deviations.

If $d < M$ then we have reduced the size of the parameter space that we need to search. In practice we will use $d=10.$ We justify the focus on low rank deviations by the fact that in many large scale systems (e.g. recommender systems, online advertising) it is infeasible to report a full vector of valuations. In practice such systems have missing valuations which are filled in by assuming a low rank structure and using matrix completion. Since an individual $i$ is small relative to the market changing their report won't change the found latent space so indeed they can only deviate in the low rank space. 

Even with these heavy restrictions, the optimization process is quite computationally expensive as it requires solving for the AMO-EG/EG equilibrium at each time step. We consider three instances: the `worst case' MovieLens marker (200 buyers, 50 items, total supply 2000), the `best case' MovieLens market (200 buyers, 1000 items, total supply 2000) and the HBS course market. 

In each replicate we sample an individual from the market, hold the others fixed, and use BO as described above to try to find a low rank deviation which increases that individual's utility. We repeat this for $49$ individuals from each market. We see that finding low rank deviations (LRDs) is quite difficult (Table \ref{ic_results}) - only about $5$ to $20 \%$ of the time does BO find any deviation from truthtelling and even when it does the gains are no more than $5$ percent of the baseline utility.

This analysis provides no guarantee that bigger deviations do not exist. Indeed, perhaps profitable deviations exist in the cut off parts of the spectrum or perhaps BO is not a good optimizing technique to find them. We also tried using standard zeroth order methods provided with SciPy \citep{jones2001scipy}. We found these to perform even worse both with initialization at truthtelling and with random restarts. In addition, we tried looking for low rank deviations in a different mechanism: one which inputs valuations and outputs the social efficiency (sum of utilities) maximizing allocation. Here there is an obvious profitable deviation - report high values for all items and receive all of them. We found that in this setting BO easily found optimal low rank deviations. Developing better methods for computing the incentive compatibility of a particular instance is an important topic for future work.
\section{Conclusion}
The multi-unit assignment problem is heavily studied both in theory and in practice. We have shown that a convex program can be used to efficiently apply the CEEI mechanism in the case of linear, at-most-one preferences with divisible goods (or lotteries). This is a natural preference class for many real world allocation problems and our results allow CEEI to scale to very large assignment tasks.

Many open questions remain: can we incorporate substitutes and complements into the CEEI mechanism in a tractable way? What other preference classes allow for simple scalable solutions to the CEEI problem? How can we take advantage of the special structure of the EG convex program to solve it more efficiently for large low-rank markets?

\bibliographystyle{ACM-Reference-Format}
\bibliography{main_ceg.bbl}

\appendix

\section{Eisenberg-Gale for Constraint Groups}
We will describe our results for a more general model than the AMO model used in the main body of the paper. Instead we will assume that we have a set $\mathcal K$ of \textit{constraint groups} which is a partitioning of items. Thus each $k \in \mathcal K$ is a subset of the original items. We assume that each buyer can be assigned at most 1 unit of items among those in the group $k$. This nests AMO preferences by letting $k$ be singletons. Furthermore, we allow general non-equal budgets rather than assuming that all budgets are equal as in the main text (equal budgets is equivalent to assuming that all budgets are $1$).

The model we wish to solve is the following
\begin{equation}
\begin{array}{rrclcl}
\max_{x} & \multicolumn{3}{l}{\sum_{i} B_i \log(v_i \cdot x_i)} \\
\textrm{s.t.} & \sum_i x_i & \leq & s \\
&\sum_{j \in k} x_{ij} & \leq & 1 & & \forall i \in N, k \in \mathcal K \\
& x & \geq & 0 & &  \\
\end{array}
\end{equation}

Writing the Lagrangian we get 
\begin{align}
  \label{eq:eg_minmax_multislot}
  \min_{p, \lambda \geq 0} \max_{x \geq 0} \sum_i B_i \log(v_i\cdot x_i) - \sum_j p_j\left(\sum_i x_i - 1 \right) - \sum_{i,k} \lambda_{ik} \left(\sum_{j \in k} x_{ij} - 1\right)
\end{align}

Let $\beta_i = \frac{B_i}{v_i\cdot x_i}$ be the \emph{utility price} of buyer $i$. The first-order optimality condition wrt. $x_{ij}$ gives
\begin{align}
  \label{eq:price_auction_bound}
  v_{ij} \beta_i - p_j - \lambda_{ik} \leq 0
  \Rightarrow 
  v_{ij} \beta_i \leq p_j + \lambda_{ik}  
\end{align}
Furthermore, if $x_{ij} > 0$ we have
\begin{align}
  \label{eq:eg_xij_fo}
  v_{ij} \beta_i = p_j + \lambda_{ik},
\end{align}
which shows that if $x_{ij}>0$ then the utility price $\frac{p_j + \lambda_{ik}}{v_{ij}}$ is tied for best among all items.

Multiplying each equation in \eqref{eq:eg_xij_fo} by $x_{ij}$ and summing over $j$ gives
\begin{align}
  \label{eq:budget spent}
   \sum_j x_{ij}(p_j + \lambda_{ik})
   = \sum_{j} x_{ij} v_{ij} \frac{B_i}{v_i\cdot x_i}
   = B_i
\end{align}
If not for $\lambda_{ik}$, this would show that each buyer is spending their entire budget under prices $p$. Coupled with \eqref{eq:eg_xij_fo} this shows that the entire budget is spent on optimal items, and thus each buyer is receiving a bundle from their demand. Thus when $\lambda = 0$ we get a market equilibrium. However, when $\lambda_{ik}>0$ for some $i,k$ we no longer get a market equilibrium under prices $p$ because the buyer is spending less than their budget. If we allow additional personalized prices for each group then we could set those equal to $\lambda_{ik}$ and we would get a form of ``market equilibrium with personalized prices'' under $p,\lambda$. But such a market equilibrium concept with personalized prices does not seem particularly compelling.
Instead, we will show that for an appropriate, and practical, notion of ``large'' markets each $\lambda_{ik}$ becomes small as the market size increases. Thus we will show that our variant of the Eisenberg-Gale convex program gives an approximate market equilibrium for large markets, where the approximation error is negligible for sufficiently-large markets.

Our result will be based on a structure satisfied by large markets: each buyer has an increasingly-large set of approximate ``twins,'' meaning buyers with approximately the same valuation. Our theorem will show that for each buyer $i$, as long as some approximate twin is unconstrained for constraint group $k$, then $\lambda_{ik}$ is approximately zero.
The following is our main theorem

\begin{theorem}[Approximate twins give bounded AMO prices]
  \label{thm:approx twin approx ceei}
  Let $(x,p,\lambda)$ be a solution to \eqref{eq:eg_minmax_multislot}. For any buyer $i$ and constraint group $k$ such that $\lambda_{ik}>0$, let $i'$ be another buyer such that $| \beta_i - \beta_{i'}| \leq \epsilon_{\beta}$,  and $\sum_{j \in k} x_{i'j} < 1$. Let $j_k\in k$ be such that $x_{ij_k}>0$ and $| v_{ij_k} - v_{i'j_k} | \leq \epsilon_v^k$.
  Then we can bound the value of $\lambda_{ik}$ as follows
  \[
    \lambda_{ik} \leq 
    \epsilon_{\beta} v_{ij_k} + \beta_{i'}\epsilon_v^k
  \]
\end{theorem}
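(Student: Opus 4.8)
The plan is to read $\lambda_{ik}$ directly off the stationarity condition at the coordinate $j_k$ where buyer $i$ has strictly positive allocation, and then to lower-bound the shared price $p_{j_k}$ by using the twin $i'$, who incurs no extra group-$k$ charge because her group-$k$ constraint is slack. Concretely, since $x_{ij_k}>0$, the equality \eqref{eq:eg_xij_fo} applies and I would rewrite it as $\lambda_{ik} = v_{ij_k}\beta_i - p_{j_k}$. This isolates the target quantity in terms of the unknown price $p_{j_k}$, so the whole problem reduces to showing that $p_{j_k}$ cannot be much smaller than $v_{i'j_k}\beta_{i'}$.

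Next I would exploit the twin hypothesis. Because $\sum_{j\in k} x_{i'j} < 1$, the group-$k$ AMO constraint for buyer $i'$ is not tight, so complementary slackness forces $\lambda_{i'k}=0$. Substituting $i'$ and the item $j_k$ into the first-order inequality \eqref{eq:price_auction_bound} then gives $v_{i'j_k}\beta_{i'} \leq p_{j_k} + \lambda_{i'k} = p_{j_k}$, i.e. the lower bound $p_{j_k} \geq v_{i'j_k}\beta_{i'}$. Feeding this back into the expression from the first step yields $\lambda_{ik} \leq v_{ij_k}\beta_i - v_{i'j_k}\beta_{i'}$, which has now eliminated the price entirely and left a difference of two products that differ only through the closeness of $i$ and $i'$.

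Finally I would bound that product difference by inserting the cross term $v_{ij_k}\beta_{i'}$, writing $v_{ij_k}\beta_i - v_{i'j_k}\beta_{i'} = v_{ij_k}(\beta_i-\beta_{i'}) + \beta_{i'}(v_{ij_k}-v_{i'j_k})$, and applying the two hypotheses $|\beta_i-\beta_{i'}|\leq\epsilon_{\beta}$ and $|v_{ij_k}-v_{i'j_k}|\leq\epsilon_v^k$ termwise. This produces exactly $\lambda_{ik} \leq \epsilon_{\beta} v_{ij_k} + \beta_{i'}\epsilon_v^k$, as claimed.

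The argument is a short KKT computation with no genuine analytic obstacle; the only steps that require care are justifying $\lambda_{i'k}=0$ through complementary slackness rather than assuming it outright, and choosing the add-and-subtract decomposition so that the surviving coefficients are precisely $v_{ij_k}$ and $\beta_{i'}$ (and not, say, $v_{i'j_k}$ and $\beta_i$). The latter point genuinely matters because the stated bound is asymmetric in $i$ and $i'$, and the symmetric-looking alternative decomposition would give the wrong coefficients.
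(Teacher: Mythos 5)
Your proposal is correct and is essentially the paper's own proof: both use complementary slackness to get $\lambda_{i'k}=0$, apply \eqref{eq:price_auction_bound} at $(i',j_k)$ to lower-bound $p_{j_k}$ by $\beta_{i'}v_{i'j_k}$, and combine with \eqref{eq:eg_xij_fo} at $(i,j_k)$ via exactly the same add-and-subtract decomposition yielding the coefficients $v_{ij_k}$ and $\beta_{i'}$. The paper merely chains the inequalities in the opposite order (bounding $p_{j_k}$ first, then substituting), which is a cosmetic difference.
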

\begin{proof}
  Applying \eqref{eq:price_auction_bound} for $i'$ and noting that $\lambda_{i'k}$, along with the bounds $\epsilon_v^k, \epsilon_{\beta}$, gives
  \begin{align*}
    p_j \geq \beta_{i'} v_{i'j_k}
    \geq \beta_{i'} (v_{ij_k} - \epsilon_v^k)
    \geq (\beta_{i} - \epsilon_{\beta}) v_{ij_k} - \beta_{i'}\epsilon_v^k
    = \beta_{i}v_{ij_k} - \epsilon_{\beta} v_{ij_k} - \beta_{i'}\epsilon_v^k
  \end{align*}

  Now we can apply \eqref{eq:eg_xij_fo} for buyer $i$ and apply our bound on $p_{j_k}$ to show the theorem
  \begin{align*}
    &v_{ij_k} \beta_i = p_{j_k} + \lambda_{ik}
    \geq  \beta_{i}v_{ij_k} - \epsilon_{\beta} v_{ij_k} - \beta_{i'}\epsilon_v^k
    + \lambda_{ik} \\
    \Leftrightarrow
    & \lambda_{ik}
    \leq   \epsilon_{\beta} v_{ij_k} + \beta_{i'}\epsilon_v^k
  \end{align*}
\end{proof}

If every buyer, constraint group pair $i,k$ has an associated unconstrained approximate twin then Theorem~\ref{thm:approx twin approx ceei} allows us to bound the error in several of the properties that are usually guaranteed by EG, as well as by CEEI and market equilibrium in general.

\subsection{Helper lemmas}
In order to show when Theorem~\ref{thm:approx twin approx ceei} gives useful bounds we will need upper and lower bounds on several basic quantities suchs as prices, utilities, and  utility-prices. For most of the lemmas in this section, the point is that we need some upper and lower bound on the quantity in question, such that the bound \emph{does not increase with market size}. Thus we can in turn use these lemmas in later sections where ratios such as $\frac{\max_{ij}v_{ij}}{\min_{ij}v_{ij}}$ or $\frac{\sum_i B_i}{\sum_j s_j}$ are bounded by a constant that does not grow with market size.
We will use the helpful assumption that $v^{\downarrow}\defeq \min_{ij} v_{ij} > 0$.

First we show that each utility price $\beta_i$ is bounded both above and below by market constants.
\begin{lemma}
  \label{lem:beta ub}
  For any buyer $i$ we have $\frac{\sum_{i'}B_{i'}}{(v^{\uparrow})^2 \sum_{j} s_j}v^{\downarrow} \leq \beta_i \leq \frac{\sum_{i'}B_{i'}}{|\mathcal{K}|v^{\downarrow}}$
  
\end{lemma}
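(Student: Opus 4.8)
The plan is to write $\beta_i = B_i/(v_i\cdot x_i)$ and reduce the two-sided bound to bounding the optimal utility $u_i \defeq v_i\cdot x_i$: the claimed upper bound on $\beta_i$ is equivalent to the lower bound $u_i \ge B_i|\mathcal{K}|v^{\downarrow}/\sum_{i'}B_{i'}$, and the claimed lower bound on $\beta_i$ is equivalent to the upper bound $u_i \le B_i (v^{\uparrow})^2 \sum_j s_j / (v^{\downarrow}\sum_{i'}B_{i'})$. Throughout I would use the first-order relations \eqref{eq:price_auction_bound} and \eqref{eq:eg_xij_fo}, the per-buyer budget identity \eqref{eq:budget spent}, and the fact that at the optimum every $u_i>0$, so the objective is differentiable there.

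For the upper bound on $\beta_i$ I would argue from global first-order optimality. The objective $f(x)=\sum_{i'}B_{i'}\log(v_{i'}\cdot x_{i'})$ is concave with gradient entry $\partial f/\partial x_{i'j} = \beta_{i'}v_{i'j}$, so optimality over the convex feasible region gives $\langle \nabla f(x), z - x\rangle \le 0$ for every feasible $z$; combined with \eqref{eq:budget spent} this reads $\sum_{i',j}\beta_{i'}v_{i'j}z_{i'j} \le \sum_{i'}B_{i'}$. I would then test this inequality against the feasible allocation that hands buyer $i$ one unit of a single item $j_k$ in each of the $|\mathcal{K}|$ groups and gives everyone else nothing (feasible whenever each $s_j\ge 1$; otherwise use $\min(1,s_j)$). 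Its value under $\nabla f(x)$ is $\sum_k \beta_i v_{ij_k} \ge |\mathcal{K}|\,\beta_i v^{\downarrow}$, so $|\mathcal{K}|\,\beta_i v^{\downarrow} \le \sum_{i'}B_{i'}$, which is exactly $\beta_i \le \sum_{i'}B_{i'}/(|\mathcal{K}|v^{\downarrow})$. This step is clean and is where the factor $|\mathcal{K}|$ enters.

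For the lower bound on $\beta_i$ I would bound $u_i$ from above. The crude estimate $u_i = \sum_j v_{ij}x_{ij} \le v^{\uparrow}\sum_j x_{ij}$ together with the global supply cap $\sum_{i',j}x_{i'j}\le \sum_j s_j$ already yields $u_i \le v^{\uparrow}\sum_j s_j$, hence the weaker bound $\beta_i \ge B_i/(v^{\uparrow}\sum_j s_j)$. To sharpen this into the stated form I would inject the market-level conservation identity $\sum_{i'}\beta_{i'}u_{i'} = \sum_{i'}B_{i'}$ (summing \eqref{eq:budget spent}) and the per-item relation $p_j + \lambda_{ik} = \beta_i v_{ij}$ on consumed items, so that the shared prices $p_j$ and the total supply pin down the price level $\approx \sum_{i'}B_{i'}/\sum_j s_j$ that every buyer faces. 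The ratio $v^{\uparrow}/v^{\downarrow}$ absorbs the gap between a buyer's own valuations and the valuations of the items actually setting the prices, producing the $(v^{\uparrow})^2$ in the denominator and the single $v^{\downarrow}$ in the numerator.

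I expect the lower bound to be the main obstacle, precisely because of the appearance of the aggregate budget $\sum_{i'}B_{i'}$ rather than the individual $B_i$ in the numerator. This is a genuinely competitive assertion --- that even a small-budget buyer cannot obtain disproportionately cheap utility --- and so it cannot follow from any single-buyer feasibility or budget argument, each of which only reproduces the weaker $B_i$-factor bound. The crux will be converting ``total effective spending $\sum_{i'}B_{i'}$ spread over total quantity at most $\sum_j s_j$'' into a per-buyer statement, using the common prices together with complementary slackness (supply is saturated wherever $p_j>0$) to force the items any buyer consumes to carry at least the market price level, up to the heterogeneity factor $v^{\uparrow}/v^{\downarrow}$.
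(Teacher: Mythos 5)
Your upper-bound half is complete and correct, and it is essentially the paper's own argument made rigorous: the paper reasons that otherwise ``the sum of payments on just a single unit of each of those items sums to more than the total amount of budget in the market,'' which is exactly your variational-inequality test $\langle \nabla f(x), z - x\rangle \leq 0$ with the bundle $z$ that hands buyer $i$ one item per constraint group, combined with $\langle \nabla f(x), x\rangle = \sum_{i'}B_{i'}$ from \eqref{eq:budget spent}. Both versions share the implicit assumption $s_j \geq 1$, which you correctly flag. Your formalization is, if anything, cleaner than the paper's.

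The lower-bound half, however, has a genuine gap: you establish only the weaker single-buyer bound $\beta_i \geq B_i/(v^{\uparrow}\sum_j s_j)$ and then sketch a plan without the key mechanism. The paper's route is concrete and three-step. First, summing \eqref{eq:budget spent} over buyers gives $\sum_{i'}B_{i'} = \sum_{i',j}x_{i'j}\beta_{i'}v_{i'j} \leq \bigl(\max_{i'}\beta_{i'}\bigr)\, v^{\uparrow}\sum_j s_j$, hence $\max_{i'}\beta_{i'} \geq \frac{\sum_{i'}B_{i'}}{v^{\uparrow}\sum_j s_j}$ --- an aggregate statement about one distinguished buyer. Second, that maximal buyer bids at least $\bigl(\max_{i'}\beta_{i'}\bigr) v^{\downarrow}$ on \emph{every} item, which is used as a floor on every price $p_j$. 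Third, since buyer $i$ exhausts their budget they must have $x_{ij}>0$ for some $j$, and \eqref{eq:eg_xij_fo} gives $\beta_i v^{\uparrow} \geq \beta_i v_{ij} = p_j + \lambda_{ik} \geq p_j$, which is where the $(v^{\uparrow})^2$ and the stray $v^{\downarrow}$ come from. Your sketch names the right ingredients (the conservation identity, common prices, complementary slackness) but never supplies the device that converts the aggregate bound into a per-item price floor --- the maximal-$\beta$ buyer acting as universal price-setter --- nor the observation that full budget expenditure forces buyer $i$ to win something, so a price floor transfers to $\beta_i$. Without those two steps the ``market price level $\approx \sum_{i'}B_{i'}/\sum_j s_j$'' intuition does not become a per-buyer inequality.

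A caution for when you try to close this: the price-floor step is itself delicate in the AMO setting, because \eqref{eq:price_auction_bound} only yields $p_j \geq \beta_{i^*}v_{i^*j} - \lambda_{i^*k}$, so the maximal buyer's bid floors $p_j$ only when that buyer is not AMO-capped on the group containing $j$ --- a qualification the paper's proof silently drops. With unequal budgets (which the appendix model allows) and binding AMO constraints the floor, and indeed the stated bound, can fail: take two buyers with budgets $1$ and $1000$, one item of supply $2$, all valuations equal to $1$; both buyers are capped at one unit, so $\beta_1 = 1$ while the claimed lower bound is $500.5$. So your instinct that the lower bound is the genuinely ``competitive'' half is exactly right --- it really does need structure beyond single-buyer feasibility, namely near-equal budgets or the unconstrained-twin condition used elsewhere in the paper to kill the $\lambda_{ik}$ terms --- but your proposal as written does not close it.
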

\begin{proof}
  For the lower bound:
  We know that the sum of payments $\sum_{ij} \beta_iv_{ij}x_{ij}$ in equilibrium is exactly $\sum_{i'}B_{i'}$ so we have
  \[
    \sum_{i'}B_{i'} = \sum_{i',j} x_{i'j} \beta_{i'} v_{i'j}
    \leq
    \max_{i'}\beta_{i'} v^{\uparrow} \sum_{i',j} x_{i'j} 
    =
    \max_{i'}\beta_{i'} v^{\uparrow} \sum_{j} s_j
  \]
  which implies $\max_{i'}\beta_{i'} \geq \frac{\sum_{i'}B_{i'}}{v^{\uparrow} \sum_{j} s_j}$. This shows that the price on every item is lower bounded by $\frac{\sum_{i'}B_{i'}}{v^{\uparrow} \sum_{j} s_j}v^{\downarrow}$; thus for $i$ to win \emph{any} items, they must have $\beta_iv_{ij} \geq \frac{\sum_{i'}B_{i'}}{v^{\uparrow} \sum_{j} s_j}v^{\downarrow}$ for some item $j$. The lower bound follows.

  For the upper bound:
  We know that the sum of prices must be lower than $\sum_{i'}B_{i'}$.  We can thus deduce that $\sum_k \beta_i \max_{j \in k} v_{ij} \leq \sum_{i'}B_{i'}$, otherwise there exists a single item from each group such that the sum of payments on just a single unit of each of those items sums to more than the total amount of budget in the market. It follows that $\beta_i \leq \frac{\sum_{i'}B_{i'}}{|\mathcal{K}|v^{\downarrow}}$
\end{proof}

From the bounds in Lemma~\ref{lem:beta ub} we can deduce the following upper and lower bounds on utility for all $i$ 
\begin{align}
  \label{eq:u ub}
  u_i = B_i\beta_i^{-1} \leq B_i \frac{\sum_{i'}B_{i'}}{(v^{\uparrow})^2 \sum_{j} s_j}v^{\downarrow} \eqdef u_i^{\uparrow} \\
  \label{eq:u lb}
  u_i = B_i\beta_i^{-1} \geq B_i \frac{|\mathcal{K}|v^{\downarrow}}{\sum_{i'}B_{i'}}v^{\downarrow} \eqdef u_i^{\downarrow}
\end{align}

We can use this lemma to also give a bound on the total amount of supply that any given buyer can be allocated:
\begin{lemma}
  \label{lem:bounded total allocation}
  For any buyer $i$ we have $\sum_{j} x_{ij} \leq \frac{u^{\uparrow}_i}{v^{\downarrow}}$
\end{lemma}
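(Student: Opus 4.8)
The plan is to bound the total allocation $\sum_j x_{ij}$ directly in terms of buyer $i$'s utility $u_i = v_i \cdot x_i$, using the observation that every item delivers at least $v^{\downarrow}$ units of utility per unit of mass allocated. First I would write out $u_i = \sum_j v_{ij} x_{ij}$ and then lower bound each summand: since $x_{ij} \geq 0$ and $v_{ij} \geq v^{\downarrow} > 0$ for every $j$ by the standing assumption $v^{\downarrow} \defeq \min_{ij} v_{ij} > 0$, we get $u_i \geq v^{\downarrow} \sum_j x_{ij}$. Rearranging immediately yields $\sum_j x_{ij} \leq u_i / v^{\downarrow}$.

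The remaining step is to replace $u_i$ by a bound that does not grow with the market, which is the whole purpose of these helper lemmas. Applying the inequality \eqref{eq:u ub}, namely $u_i \leq u_i^{\uparrow}$ (which itself follows from the lower bound on $\beta_i$ established in Lemma~\ref{lem:beta ub}), we conclude $\sum_j x_{ij} \leq u_i^{\uparrow} / v^{\downarrow}$, as claimed.

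There is no real obstacle here: the result is a one-line consequence of the definition of utility combined with the uniform lower bound on valuations and the already-proved upper bound on utility. The only subtlety worth flagging is keeping the directions of the inequalities straight — we need a \emph{lower} bound on $u_i$ in terms of the allocation in order to \emph{upper} bound the allocation, and then a separate \emph{upper} bound on $u_i$ itself so that the final right-hand side is a market constant, consistent with the stated aim that these quantities not increase with market size.
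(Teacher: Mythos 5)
Your proposal is correct and matches the paper's proof essentially verbatim: both bound $u_i = \sum_j v_{ij} x_{ij} \geq v^{\downarrow} \sum_j x_{ij}$ using $v^{\downarrow} = \min_{ij} v_{ij} > 0$, and then invoke the utility upper bound $u_i \leq u_i^{\uparrow}$ from \eqref{eq:u ub} to make the right-hand side a market-size-independent constant. Nothing further is needed.
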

\begin{proof}
  This follows from the upper bound on utility in \eqref{eq:u ub}
  \[
    u_i^{\uparrow} \geq u_i
    =
    \sum_{j} x_{ij} v_{ij}
    \geq
    \sum_{j} x_{ij} v^{\downarrow}
  \]
\end{proof}

Now we will show that for every buyer-pair we can bound their difference in utility prices based on how similar their valuation vector is. This lemma combines with Theorem~\ref{thm:approx twin approx ceei} to show that if every buyer $i$ has a large-enough set of approximate twins (such that some approximate twin is not constrained for constraint group $k$), then they have a set of competitors that will force $\lambda_{ik}$ to zero for each $k$.
\begin{lemma}
  \label{lem:beta bounds}
  Let $p$ be a set of prices such that $\sum_j s_jp_j \leq n$,
  and let $\beta$ be a vector of utility prices such that every buyer $i$ has an associated $x_i$ such that $\sum_j x_{ij}(p_j+\lambda_{ik})\leq B_i$ and \eqref{eq:eg_xij_fo} is respected for all $x_{ij}>0$. Then we have the following multiplicative and additive bounds on utility-price differences between buyer pairs $i,i'$ such that $\| v_i - v_{i'} \|_{\infty} \leq \epsilon$
  \begin{align}
    \label{eq:beta mult bound}
     \epsilon^{\downarrow} \beta_{i'} \leq \beta_{i} \leq \epsilon^{\uparrow} \beta_{i'},
  \end{align}
  \begin{align}
    \label{eq:beta add bound}
    |\beta_i - \beta_{i'}| \leq (\epsilon^{\uparrow} - 1) \frac{\sum_{i'}B_{i'}}{|\mathcal{K}v^{\downarrow}},
  \end{align}
where $\epsilon^{\downarrow} = \min_{i^*\in\mathcal{I},j \in J} \frac{v_{i^*j'}}{v_{i^*j'} + \epsilon}$ and 
  $\epsilon^{\uparrow} = \max_{i^*\in\mathcal{I},j \in J} \frac{v_{i^*j'}}{v_{i^*j'} - \epsilon}$, with $\mathcal{I} = \{i,i'\}$.
\end{lemma}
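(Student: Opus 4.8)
The plan is to reduce the whole lemma to a \emph{best-bang-per-buck} characterization of the utility prices and then read off the two bounds from the closeness of the valuation vectors. From the first-order conditions \eqref{eq:price_auction_bound} and \eqref{eq:eg_xij_fo} one gets that for every buyer $i$, $\beta_i = \min_j \frac{p_j + \lambda_{ik}}{v_{ij}}$, where the minimum is attained exactly at the items $i$ actually purchases (those with $x_{ij}>0$). Thus $\beta_i$ is pinned down \emph{with equality} by any single purchased item, while every item only gives an upper bound via \eqref{eq:price_auction_bound}. This asymmetry is what I would exploit: to upper-bound a ratio I evaluate $\beta$ at an item the \emph{other} buyer purchases, and to lower-bound it I evaluate at an item the buyer itself purchases.

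Concretely, for $\beta_i \le \epsilon^\uparrow \beta_{i'}$ I would pick an item $j'$ with $x_{i'j'}>0$, so that $\beta_{i'} v_{i'j'} = p_{j'} + \lambda_{i'k'} \ge p_{j'}$ gives $p_{j'}\le \beta_{i'}v_{i'j'}$. Applying \eqref{eq:price_auction_bound} for buyer $i$ at the same item yields $\beta_i \le \frac{p_{j'}+\lambda_{ik'}}{v_{ij'}}$, and substituting the price bound gives $\beta_i \le \frac{\beta_{i'}v_{i'j'}+\lambda_{ik'}}{v_{ij'}}$. Dropping the residual multiplier (discussed below) and using $v_{i'j'}\le v_{ij'}+\epsilon$ from $\| v_i - v_{i'} \|_\infty\le\epsilon$ converts the valuation ratio into at most $\frac{v_{i'j'}}{v_{i'j'}-\epsilon}$, i.e.\ a factor of at most $\epsilon^\uparrow$. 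The bound $\beta_i\ge\epsilon^\downarrow\beta_{i'}$ is symmetric: pick $j^*$ with $x_{ij^*}>0$, use $p_{j^*}\le\beta_i v_{ij^*}$, apply \eqref{eq:price_auction_bound} for $i'$ at $j^*$, and turn $\frac{v_{ij^*}}{v_{i'j^*}}\le\frac{v_{i'j^*}+\epsilon}{v_{i'j^*}}$ into the factor $1/\epsilon^\downarrow$. Here $\epsilon^\downarrow=\min \frac{v}{v+\epsilon}$ and $\epsilon^\uparrow=\max\frac{v}{v-\epsilon}$ over $i^*\in\{i,i'\}$ and the items are exactly the extremal values of these ratios.

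The additive bound then drops out as a corollary of the multiplicative one together with Lemma~\ref{lem:beta ub}: from $\beta_i\le\epsilon^\uparrow\beta_{i'}$ we get $\beta_i-\beta_{i'}\le(\epsilon^\uparrow-1)\beta_{i'}$, and from $\beta_i\ge\epsilon^\downarrow\beta_{i'}$ we get $\beta_{i'}-\beta_i\le(1-\epsilon^\downarrow)\beta_{i'}$; since $1-\epsilon^\downarrow\le\epsilon^\uparrow-1$, both are at most $(\epsilon^\uparrow-1)\beta_{i'}$, and plugging in the upper bound $\beta_{i'}\le\frac{\sum_{i'}B_{i'}}{|\mathcal K|v^\downarrow}$ from Lemma~\ref{lem:beta ub} gives precisely $|\beta_i-\beta_{i'}|\le(\epsilon^\uparrow-1)\frac{\sum_{i'}B_{i'}}{|\mathcal K|v^\downarrow}$. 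The hypothesis $\sum_j s_jp_j\le n$ enters only through the constants supplied by that lemma.

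The main obstacle is the residual personalized multiplier $\lambda_{ik'}$ (resp.\ $\lambda_{i'k^*}$) in the cross-buyer step: unlike the item prices $p_j$, which both buyers share, these multipliers are buyer-specific, so the numerators $p_{j'}+\lambda_{ik'}$ and $p_{j'}+\lambda_{i'k'}$ need not coincide even when $v_i\approx v_{i'}$. The clean cancellation above goes through cleanly only when the multiplier one is forced to keep is zero, i.e.\ when the relevant buyer is unconstrained in that group, which is exactly the structural property that Theorem~\ref{thm:approx twin approx ceei} and the large-market/low-rank constructions guarantee. I would therefore either invoke that unconstrained-twin structure to zero out the retained $\lambda$, or, to keep the lemma self-contained, carry the $\lambda$ term explicitly and absorb it via the identity $\lambda_{ik}=\beta_iv_{ij''}-p_{j''}$ on an item $j''$ the buyer does purchase in that group; checking that this bookkeeping still collapses to the stated $\epsilon^\uparrow,\epsilon^\downarrow$ factors is the one genuinely delicate computation.
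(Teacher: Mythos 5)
Your proposal does not close, and you correctly locate where it stalls: the buyer-specific multipliers $\lambda_{ik}$ in the first-order conditions. But having named the obstacle you do not overcome it, and both of your suggested repairs fail. Invoking the unconstrained-twin structure imports a hypothesis the lemma deliberately does not make---and cannot afford to make: the lemma is later applied (e.g., in the proof of Theorem~\ref{thm:spl}, to every pair of buyers sharing a bin of the type space) with no guarantee that the relevant purchased items lie in groups where either buyer is unconstrained. And your self-contained bookkeeping does not ``collapse'': substituting $\lambda_{ik'} = \beta_i v_{ij''} - p_{j''} \leq \beta_i v_{ij''}$ (for $j''\in k'$ with $x_{ij''}>0$) into your inequality $\beta_i v_{ij'} \leq \beta_{i'}v_{i'j'} + \lambda_{ik'}$ gives $\beta_i\left(v_{ij'} - v_{ij''}\right) \leq \beta_{i'}v_{i'j'}$, and since $v_{ij''}$ may well exceed $v_{ij'}$ the left-hand side can be nonpositive, yielding no bound of the form $\beta_i \leq \epsilon^{\uparrow}\beta_{i'}$ at all.

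The paper's proof sidesteps the multipliers entirely by never routing the comparison through the shared prices $p_j$. It compares \emph{bids to bids}: on any purchased item the full personalized payment equals the bid, $\beta_{i'}v_{i'j} = p_j + \lambda_{i'k}$ by \eqref{eq:eg_xij_fo}, so the $\lambda$'s are folded into the payment rather than needing to vanish. If $\beta_i v_{ij} > \beta_{i'}v_{i'j}$ held on \emph{every} item $j$ that $i'$ obtains, then $i$ would win all the items $i'$ wins and pay strictly more, contradicting the budget-feasibility hypothesis $\sum_j x_{ij}(p_j+\lambda_{ik}) \leq B_i$. Hence some item $j$ won by $i'$ satisfies $\beta_i v_{ij} \leq \beta_{i'}v_{i'j}$, which together with $v_{ij} \geq v_{i'j} - \epsilon$ gives $\beta_i \leq \beta_{i'}\frac{v_{i'j}}{v_{i'j}-\epsilon}$ directly; the lower bound is the symmetric argument. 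So the missing idea is an exchange/budget-exhaustion argument rather than a per-item first-order comparison, and it is precisely what makes the lemma hold with no twin or unconstrainedness assumptions. Your derivation of the additive bound from the multiplicative one via Lemma~\ref{lem:beta ub}, including the observation that $1-\epsilon^{\downarrow} \leq \epsilon^{\uparrow}-1$, does match the paper.
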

\begin{proof}
  We wish to bound $\beta_i$ in terms of $\beta_{i'}$. First, if $\beta_iv_{ij} > \beta_{i'}v_{i'j}$ for all items $j$ that $i'$ obtain under $\beta'$ then $i$ must be spending strictly more than their budget, since they would win all items that $i'$ wins, and pay strictly more. Thus, there must exists some item $j$ such that $\beta_iv_{ij} \leq \beta_{i'}v_{i'j}$, which implies $\beta_i \leq \beta_{i'}\frac{v_{i'j}}{v_{i'j}-\epsilon}$. The converse argument shows that there must exist some  $j'$ such that $\beta_{i} \geq \beta_{i'}\frac{v_{ij'}}{v_{ij'} + \epsilon}$.  We thus have the multiplicative bound
  \begin{align}
    \label{eq:beta sandwich}
    \frac{v_{ij'}}{v_{ij'}+\epsilon} \beta_{i'} \leq \beta_{i} \leq \frac{v_{i'j}}{v_{i'j} - \epsilon} \beta_{i'}.
  \end{align}
  
  The additive bound follows from the multiplicative bound because (choosing $i,i'$ arbitrarily)
  \[
    \beta_{i'} - \beta_i \leq \epsilon^{\uparrow} \beta_i - \beta_i
    \leq (\epsilon^{\uparrow}-1) \beta_i \leq (\epsilon^{\uparrow}-1) \frac{\sum_{i'}B_{i'}}{|\mathcal{K}|v^{\downarrow}}.
  \]
\end{proof}
Note that Lemma~\ref{lem:beta bounds} does not require supply constraints to be respected, it only requires that buyers pay according to $\beta_iv_{ij}$ and buy items when they bid more than the price $p$. Thus the lemma applies to market equilibria, but also more broadly to allocations attained via picking the budget-exhausting $\beta_i$ given prices $p$.

\subsection{Regret bounds}

Now we can show that the bounds on AMO prices from Theorem~\ref{thm:approx twin approx ceei} lead to an approximate market equilibrium, in the sense that the regret of each buyer is bounded.

\begin{corollary}
  \label{cor:regret bound}
  Let $(x,p,\lambda)$ be the solution to \eqref{eq:eg_minmax_multislot}, and let $x,p$ be used as an approximate market equilibrium. For any buyer $i$ such that for every constraint group $k$ they have an $\epsilon$-twin $i_k$ such that $\sum_{j\in k}x_{i_kj}<1$, regret is bounded by $\frac{u_i^{\uparrow}}{v^{\downarrow}}\max_k \beta_i^{-1} \left(  \eps_{\beta}v_{ik} + \beta_{i_k}\epsilon_v^k\right)$.
  When $\frac{s_j}{n} \leq 1$ for all $j$ then the proportional share gap is bounded by the same quantity.
  For the equal-budgets setting envy is bounded by the same quantity.
\end{corollary}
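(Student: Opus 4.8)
The plan is to prove the regret bound first and obtain the proportional-share and envy bounds almost for free, since each amounts to comparing buyer $i$'s utility $u_i$ against a bundle that is simultaneously feasible and affordable for $i$, hence dominated by $i$'s demand utility $\bar{u}_i(p)$. To begin the regret bound, let $x_i^*$ be a demand bundle attaining $\bar{u}_i(p)$, i.e. a utility maximizer under prices $p$ subject to the budget $B_i$ and the group constraints $\sum_{j\in k}x^*_{ij}\leq 1$. First I would apply the first-order inequality \eqref{eq:price_auction_bound}, $v_{ij}\beta_i\leq p_j+\lambda_{ik}$, termwise along $x_i^*$:
\[ \beta_i\,\bar{u}_i(p)=\beta_i\sum_j x^*_{ij}v_{ij}\leq\sum_j x^*_{ij}p_j+\sum_k\lambda_{ik}\sum_{j\in k}x^*_{ij}. \]
Using budget feasibility $\sum_j x^*_{ij}p_j\leq B_i$ on the first term, factoring $\max_k\lambda_{ik}$ out of the second so that $\sum_k\lambda_{ik}\sum_{j\in k}x^*_{ij}\leq(\max_k\lambda_{ik})\sum_j x^*_{ij}$, dividing by $\beta_i$, and recalling $u_i=v_i\cdot x_i=B_i\beta_i^{-1}$ yields
\[ \bar{u}_i(p)-u_i\leq \beta_i^{-1}(\max_k\lambda_{ik})\sum_j x^*_{ij}. \]

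The next step is to bound the total demand allocation $\sum_j x^*_{ij}$ by the market constant $u_i^{\uparrow}/v^{\downarrow}$, which is the demand-bundle analogue of Lemma~\ref{lem:bounded total allocation}. Since $\bar{u}_i(p)=\sum_j x^*_{ij}v_{ij}\geq v^{\downarrow}\sum_j x^*_{ij}$, it suffices to argue that the demand utility is bounded by the same constant $u_i^{\uparrow}$ that bounds equilibrium utility in \eqref{eq:u ub}; this holds because that bound is obtained purely from the budget $B_i$ together with the price lower bound of Lemma~\ref{lem:beta ub}, neither of which distinguishes the demand bundle from the equilibrium bundle. Substituting gives regret $\leq\frac{u_i^{\uparrow}}{v^{\downarrow}}\max_k\beta_i^{-1}\lambda_{ik}$, and finally I would invoke Theorem~\ref{thm:approx twin approx ceei} group by group, using the hypothesized unconstrained $\epsilon$-twin $i_k$, to replace each $\lambda_{ik}$ by $\eps_{\beta}v_{ij_k}+\beta_{i_k}\epsilon_v^k$; the translation of the twin condition $\|v_i-v_{i_k}\|_{\infty}\leq\epsilon$ into the parameters $\eps_{\beta},\epsilon_v^k$ is supplied by Lemma~\ref{lem:beta bounds}. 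This produces the stated bound.

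For the proportional-share gap, the bundle assigning $s_j/n$ of each item is feasible exactly when $s_j/n\leq 1$ (so it respects the singleton group constraints) and is affordable because $\sum_j(s_j/n)p_j=\tfrac1n\sum_j s_jp_j\leq 1=B_i$, using that total spending is at most the total budget $n$. Hence its utility is at most $\bar{u}_i(p)$, and the proportional-share gap is at most $\bar{u}_i(p)-u_i$, the regret. For envy in the equal-budget setting, any other buyer's bundle $x_{i'}$ respects the group constraints and, by the budget identity \eqref{eq:budget spent} together with $\lambda\geq 0$, costs $\sum_j x_{i'j}p_j\leq B_{i'}=B_i$ at prices $p$, so it too is feasible and affordable for $i$; thus $v_i\cdot x_{i'}\leq\bar{u}_i(p)$ and envy is again at most the regret.

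The main obstacle I anticipate is the total-allocation step: obtaining the clean $u_i^{\uparrow}/v^{\downarrow}$ factor rather than a bound scaling with the number of groups $|\mathcal{K}|$ (which grows with the market) requires an upper bound on the demand utility $\bar{u}_i(p)$ that is independent of both the regret being bounded and the market size, and one must check that this bound genuinely applies to the demand bundle and not merely to the equilibrium bundle. The remaining steps are routine manipulations of the first-order conditions combined with the already-established helper lemmas.
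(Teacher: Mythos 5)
Your proposal is correct and follows essentially the same route as the paper's own proof: regret is bounded first by combining the first-order conditions with Theorem~\ref{thm:approx twin approx ceei} and a total-allocation bound of the form $u_i^{\uparrow}/v^{\downarrow}$, and the proportional-share and envy bounds then follow by exhibiting the proportional bundle and each $x_{i'}$ as feasible, affordable comparison bundles dominated by $\bar{u}_i(p)$. The only difference is bookkeeping in the regret step --- the paper bounds the leftover budget $\sum_j x_{ij}\lambda_{ik_j}$ along the \emph{equilibrium} bundle and invokes Lemma~\ref{lem:bounded total allocation} for $x_i$, whereas you run the inequality $v_{ij}\beta_i \leq p_j + \lambda_{ik}$ termwise along the \emph{demand} bundle $x^*_i$ and extend the total-allocation bound to $x^*_i$ via the price lower bound inside Lemma~\ref{lem:beta ub}'s proof --- which is an equivalent, and if anything more explicit, rendering of the paper's informal ``leftover budget spent at bang-per-buck $\beta_i$'' argument.
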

\begin{proof}
  We first show the result for regret. For any constraint group $k$ such that $\lambda_{ik}>0$ we know from \eqref{eq:price_auction_bound} that $i$ already bought a whole unit of their best bang-per-buck item from group $k$. Thus, at best $i$ can spend their leftover budget $\sum_j x_{ij}\lambda_{ik}$ on items with utility price $\beta_i$ or worse. Since every item they currently buy has utility price $\beta_i$ or better, their regret is bounded by
  \begin{align*}
    \beta_i^{-1} \sum_{j} x_{ij} \lambda_{ik_j}
    \leq
    \beta_i^{-1}  \sum_{k} \sum_{j\in k} x_{ij}  (\eps_{\beta}v_{ik} + \eps_v^k \beta_{i_k})
    \leq
     \frac{u^{\uparrow}}{v^{\downarrow}}\max_{k} \beta_i^{-1} \left( \eps_{\beta}v_{ik} + \beta_{i_k}\eps_v^k \right).
  \end{align*}
  The first inequality follows from Theorem~\ref{thm:approx twin approx ceei}. The second inequality follows by taking the max over terms in parentheses and using Lemma~\ref{lem:bounded total allocation} to bound $\sum_{k, j\in k}x_{ij}$.

  To show the proportional share result, note that a demand solution for buyer $i$ receives at least the proportional share; since $\sum_j s_jp_j \leq \sum_{i'}B_{i'}$ and $\frac{s_j}{n}\leq 1$ buyer $i$ can buy $\frac{s_j}{n}$ of each item and be guaranteed to satisfy both the AMO and budget constraints. But weknow from the bound on regret that buyer $i$ receives a bundle that is within the stated distance to a demand solution, and thus the bound on the proportional share gap follows.

  To show the result for envy, simply note that a demand solution has zero envy when budgets are equal. Since our current solution satisfies the stated bound to a demand  solution, the envy must be bounded by the same quantity.
\end{proof}

The above bound on envy uses the observation that in market equilibrium buyers get an allocation from their demand set, and for equal budgets this means that they must weakly prefer their own allocation. In our experiments that we show later, we find that there is no envy, even when regret is non-negligible. That may be due to the observation that envy can be bounded by the weakly stronger statement
\[
  \text{Envy}_i \leq \sum_k \beta_i^{-1} \left(  \eps_{\beta}v_{ik} + \beta_{i_k}\epsilon_v^k\right) - \min_{i'} v_i \cdot (x^* - x_{i'})
\]
where $x^*$ is a demand allocation for buyer $i$. This statement shows that if $i$ strictly prefers their demand allocation to the current allocation of others, then that non-zero difference cancels out some of the error due to not getting an allocation from their demand set.

Corollary~\ref{cor:regret bound} has a dependence on the utility prices $\beta_i, \beta_{i_k}$, but we know from Lemma~\ref{lem:beta bounds} that this dependence can be removed. These facts imply Theorem~\ref{maintheorem}.

\subsubsection{Proof of Theorem~\ref{maintheorem}}

We will show a slightly more general version of Theorem~\ref{maintheorem} which holds for constraints groups rather just the AMO setting. Let $s_{max} = \max_{k \in \mathcal{K}} \sum_{j \in k} s_j$.

\begin{proof}
  Let $\delta \geq 0$ be some desired approximate equilibrium. Since every buyer has at least $s_{max}+1$ $\epsilon$-twins we have that for each buyer and constraint group $k$ there is at least one $\epsilon$-twin for whom constraint $k$ is not binding.
  By Corollary~\ref{cor:regret bound} we then have that the regret of buyer $i$ is bounded by
  \[
    \frac{u_i^{\uparrow}}{v^{\downarrow}}\max_k \beta_i^{-1} \left(  \eps_{\beta}v_{ik} + \beta_{i'}\epsilon_v^k\right)
    \leq
    \frac{u_i^{\uparrow}}{v^{\downarrow}}\max_k \frac{(v^{\uparrow})^2\sum_j s_j}{\sum_{i'}B_{i'}} \left(  \eps_{\beta}v_{ik} + \frac{\sum_{i'}B_{i'}}{|\mathcal{K}|v^{\downarrow}}\epsilon\right)
  \]
  where the inequality follows by Lemma~\ref{lem:beta ub}. Now we can apply Lemma~\ref{lem:beta bounds} to bound this by
  \[
    \frac{u_i^{\uparrow}}{v^{\downarrow}}\max_k \frac{(v^{\uparrow})^2\sum_j s_j}{\sum_{i'}B_{i'}} \left( (\epsilon^{\uparrow} - 1) \frac{\sum_{i'}B_{i'}}{|\mathcal{K}v^{\downarrow}} v_{ik} + \frac{\sum_{i'}B_{i'}}{|\mathcal{K}|v^{\downarrow}}\epsilon\right)
  \]
where $\epsilon^{\uparrow} = \max_{i'\in\mathcal{I},j \in J} \frac{v_{i'j'}}{v_{i'j'} - \epsilon}$, where $\mathcal{I}$ is the set of $\epsilon$-twins for $i$.
  For a sufficiently small $\epsilon$ this expression will be less than $\delta$.
\end{proof}

\subsection{Large-market results}

We will investigate the low-rank market setting from Section~\ref{sec:amo}. Armed with Corollary~\ref{cor:regret bound} we know that if every buyer has greater than $\max_j s_j + 1=s+1$ approximate twins then their regret is bounded by a market-size independent constant that only depends on how similar their approximate twins are. Thus, we may take a discretization of the low-rank spaces $\Theta,\Psi$ into a finite set of bins (this can also be thought of as a gridding, with each bin corresponding to a hypercube in the gridding).
For each bin, we will consider any pair of buyers whose types $\theta_i,\theta_{i'}$ belong to the same bin to be approximate twins. Let $\Theta'$ be the set of types corresponding to some bin. Since we know the maximum possible distance between types in $\Theta'$ we get an upper bound on valuation differences of the form $\|v_i - v_{i'}\|_{\infty} \leq \max_{\theta, \theta' \in \Theta'} \max_{\psi \in \Psi} \|\psi \cdot (\theta - \theta') \|_\infty$. Since the low-rank spaces are compact this is bounded by a constant that depends on the bin size.
We may make our discretization arbitrarily fine, and as the discretization gets finer we get tighter and tighter bounds on $\epsilon_{\beta}$ and $\epsilon^k_v$ for pairs of buyers that share a bin. For a sufficiently fine discretization we can then sample larger and larger markets; for some sufficiently large market each bin has at least $s+1$ buyers in it with exponentially high likelihood, which we show below for completeness.

We can think of such a binning as a multinomial distribution where each variable in the multinomial distribution corresponds to a bin in the gridding.
\begin{lemma}
  \label{lem:multinomial}
  Let $x_1,\ldots,x_K$ be random variables denoting the number of balls in each of $K$ bins, with balls distributed according to a multinomial distribution with bin-probabilities $\gamma_1,\ldots, \gamma_K$. The probability that bin $i$ has less than $\frac{\mathbb{E}x_i}{c}$ balls can be bounded as
  \begin{align*}
    \bbP\left(x_i \leq \frac{\mathbb{E}x_i}{c}\right) \leq \exp\left(-2n\gamma_i^2 \left(\frac{c-1}{c}\right)^2\right)
  \end{align*}
The probability that any bin has less than $\frac{\mathbb{E}x_i}{c}$ balls can be bounded with a union bound as
  \begin{align*}
    \bbP\left(\exists i: x_i \leq \frac{\mathbb{E}x_i}{c} \right)
    \leq K \exp\left(-2n\min_i\gamma_i^2 \left(\frac{c-1}{c}\right)^2\right) 
    \leq \exp\left(\log(K)-2n\min_i\gamma_i^2 \left(\frac{c-1}{c}\right)^2\right) 
  \end{align*}
\end{lemma}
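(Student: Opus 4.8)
The plan is to reduce this to a single application of Hoeffding's inequality on each bin's marginal distribution, followed by a union bound. First I would observe that although $x_1,\dots,x_K$ are jointly multinomial (hence dependent), the only facts needed for both claims are the \emph{marginal} law of each $x_i$ together with a union bound; in particular no independence across bins is required. The marginal distribution of a single coordinate of a multinomial is $x_i \sim \mathrm{Binomial}(n,\gamma_i)$, i.e.\ $x_i = \sum_{t=1}^n Y_t$ where the $Y_t$ are i.i.d.\ $\mathrm{Bernoulli}(\gamma_i)$ indicators, each supported on $[0,1]$, and $\mathbb{E} x_i = n\gamma_i$.

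Next I would apply the lower-tail form of Hoeffding's inequality to this bounded sum. For a sum of $n$ independent random variables taking values in $[0,1]$, Hoeffding gives $\bbP(x_i \leq \mathbb{E} x_i - t) \leq \exp(-2t^2/n)$ for every $t \geq 0$. The event $\{x_i \leq \mathbb{E} x_i / c\}$ is exactly $\{x_i \leq \mathbb{E} x_i - t\}$ with deviation $t = \mathbb{E} x_i\,(1 - 1/c) = n\gamma_i\,\frac{c-1}{c}$, which is nonnegative precisely because $c \geq 1$. Substituting this $t$ and simplifying $2t^2/n = 2n\gamma_i^2\bigl(\frac{c-1}{c}\bigr)^2$ yields the first displayed bound directly.

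For the second claim I would take a union bound over the $K$ bins, $\bbP(\exists i: x_i \leq \mathbb{E} x_i/c) \leq \sum_{i=1}^K \bbP(x_i \leq \mathbb{E} x_i/c)$, bound each summand by the first part, and replace $\gamma_i^2$ by its minimum $\min_i \gamma_i^2$ (which only enlarges each exponential since the exponent becomes less negative). This gives $K\exp\bigl(-2n\min_i\gamma_i^2(\frac{c-1}{c})^2\bigr)$, and writing $K = \exp(\log K)$ and pulling it inside the exponential yields the final form.

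As for the main obstacle: there is essentially none mathematically, as the content is a standard concentration bound. The only point requiring a moment's care is justifying that the multinomial's joint dependence is irrelevant here: the per-bin bound uses only the binomial marginal, and the union bound for the ``exists'' statement holds regardless of dependence, so the coordinates never need to be independent. A secondary sanity check is the condition $c \geq 1$, which guarantees the Hoeffding deviation $t$ is nonnegative; for $c < 1$ the stated event has probability one and the bound would be vacuous anyway.
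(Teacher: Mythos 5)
Your proof is correct and follows essentially the same route as the paper's: the binomial marginal $x_i \sim \mathrm{Binomial}(n,\gamma_i)$ viewed as a sum of indicators, Hoeffding's lower-tail inequality with deviation $t = n\gamma_i\frac{c-1}{c}$, and a union bound over the $K$ bins. Your added remarks (dependence across bins is irrelevant, and $c \geq 1$ is needed for the deviation to be nonnegative) are sound clarifications the paper leaves implicit.
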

\begin{proof}
  First we note that the expected number of balls in bin $i$, $\bbE(x_i)$, can be viewed as a sum of indicator random variables, and thus $\bbE(x_i) = n\gamma_i$. 
  Thus having only $\frac{\mathbb{E}x_i}{c}$ ball in bin $i$ means that $x_i$ deviates from its mean by $\mathbb{E}x_i (1 - \frac{1}{c})$. Now we can apply Hoeffding's inequality to obtain the first result:
  \begin{align*}
    \bbP\left(x_i \leq \frac{\mathbb{E}x_i}{c}\right) \leq \exp(-2(n\gamma_i(1-\frac{1}{c}))^2 / n)
  \end{align*}
  To obtain the second result we can simply use a union bound based on the first result.
\end{proof}


Thus, in low-rank markets every buyer has a large set of approximate twins, and thus the conditions in Lemma~\ref{lem:beta bounds} are satisfied for a large group of buyers. Simultaneously, in this market model the supply of each item remains bounded by a constant, and so for any buyer $i$ and constraint group $k$, there must exist an approximate twin $i'$ such that they have $\sum_jx_{ij}<1$.  It now follows from Corollary~\ref{cor:regret bound} that in large low-rank markets solutions to AMO-EG are approximate market equilibria in the sense that the regret of each buyer goes to zero almost surely as the market gets large.

\subsubsection{Proof of Theorem~~\ref{maintheorem probabilistic}}
\begin{proof}
  By Theorem~\ref{maintheorem} there exists an $\epsilon$ such that when every buyer has $s+1$ $\epsilon$-twins then $ \left(x_{AMO},p_{AMO}\right) $ is a $\delta$-approximate CEEI. Now consider a discretization of the buyer type space into regions such that for each region any pair of types within that region are guaranteed to be $\epsilon$-twins. Such a discretization is guaranteed to exist due to the type space being of fixed finite dimension $d$ and compact. But then we can apply Lemma~\ref{lem:multinomial} to get that the probability of having each region contain less than half its expected number of buyers decreases exponentially in market size. Thus for a sufficiently large market each buyer has a constant fraction of $n$ $\epsilon$-approximate twins with exponentially-high likelihood in market size. In particular, we may pick $n$ such that the probability is at least $\delta$.
\end{proof}

\section{IC in the large results}

In this section we will show that the mechanism resulting from using the solutions to EG and AMO-EG for allocation are both \emph{strategy-proof in the large} (SP-L) for the low-rank setting. This notion of strategyproofness was introduced by \citet{azevedo2018strategy} as a way to show that some mechanisms become approximately incentive compatible (IC) as the market gets large, even if they are not IC in general. Intuitively speaking, the requirement is that for any strategy profile (i.e. a mapping from type to reported type) such that the resulting reported-type distribution has full support and $\epsilon$, there exists a sufficiently large market such that every buyer can gain at most $\epsilon$ by misreporting.

Our SPL proof will consist of four steps. First, we show a certain monotonicity condition on the $\beta$ variables in AMO-EG. Secondly, we show that each price becomes lower bounded by a market-size-independent constant with high probability. Thirdly, using the monotonicity result we show that each buyer can have only a negligible effect on price with high probability. Finally, we use this to conclude that each buyer cannot change their utility by much.

We now show a monotonicity condition on $\beta$ in the at-most-one setting. In particular, we will interpret the utility prices $\beta$ as pacing multipliers in an allocation model that uses auctions. We will show that taking the element-wise maximum over two budget-feasible $\beta, \beta'$ retains feasibility in this setting. Similar results were previously shown for the standard EG model~\citep{peysakhovich2019fair}, and originally shown for a quasi-linear model that explicitly uses auctions~\citep{conitzer2019pacing}.

\begin{defn}[Auction allocation via paced bids]
  For a given $\beta$, \emph{paced auction allocation} sets each price $p_j$ to the $s_j$'th highest bid, and sets $x_{ij}=1$ if $\beta_iv_{ij}$ is among the $s_j$ highest bids (for $i$ tied for $s_j$'th we allow fractional allocation). Set $\lambda_{ij} = \beta_iv_{ij} - p_j$. Utility prices $\beta$ are \emph{budget-feasible pacing multipliers} (BFPM) if the corresponding $x,p,\lambda$ leads to an allocation such that $\sum_j x_{ij}(p_j+\lambda_{ij}) \leq B_i$ for all buyers $i$.
\end{defn}
\begin{proposition}
  \label{prop:beta monotone}
  The utility prices $\beta$ satisfy the following monotonicity condition in the AMO-EG setting. If $\beta,\beta'$ are both budget-feasible under paced auction allocation then $\beta^+_i = \max(\beta_i \beta_i')$ is budget-feasible.
\end{proposition}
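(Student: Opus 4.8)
The plan is to prove the result buyer-by-buyer, exploiting the fact that in the at-most-one setting the paced auction decouples across items: each item $j$ is an independent uniform-price auction with supply $s_j$ in which buyer $i$ submits the bid $\beta_i v_{ij}$ and wins a unit precisely when this bid is among the $s_j$ highest. Fix a buyer $i$. By the symmetry of the element-wise maximum, I may assume without loss of generality that $\beta^+_i = \beta_i$ (otherwise swap the roles of $\beta$ and $\beta'$ throughout and invoke budget-feasibility of $\beta'$ instead). The goal is then to show that buyer $i$'s spending under $\beta^+$ is no larger than under $\beta$, from which budget-feasibility of $\beta$ yields the claim. Note first that for any item with $x_{ij}>0$ we have $p_j + \lambda_{ij} = p_j + (\beta_i v_{ij} - p_j) = \beta_i v_{ij}$, so the spending of buyer $i$ is exactly $\sum_j x_{ij}\beta_i v_{ij}$; it therefore suffices to compare the won quantities $x_{ij}$ item by item.

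First I would establish price monotonicity. Passing from $\beta$ to $\beta^+$ only weakly raises every buyer's bid, since $\beta^+_{i'} v_{i'j} \ge \beta_{i'} v_{i'j}$ for all $i',j$. Because the clearing price $p_j$ (the $s_j$'th highest bid) is a nondecreasing function of the bid profile, I obtain $p^+_j \ge p_j$ for every item $j$.

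Next I would show $x^+_{ij} \le x_{ij}$ for every item $j$. Buyer $i$'s own bid is unchanged ($\beta^+_i v_{ij} = \beta_i v_{ij}$) while the price weakly rose, so any item that $i$ wins to a positive extent under $\beta^+$ satisfies $\beta_i v_{ij} \ge p^+_j \ge p_j$ and is thus also won under $\beta$. The only delicate case is an item that clears at a tie, where $\beta_i v_{ij} = p^+_j$ and buyer $i$ receives a fractional unit: here I would argue by counting. The fractional mass available at the clearing price equals the supply minus the number of bids strictly exceeding it, split among the tied bidders. Raising the other bids can only weakly increase the count of bids strictly above the tie and weakly enlarge the set of tied bidders, so buyer $i$'s fractional share can only shrink; hence $x^+_{ij} \le x_{ij}$ in this case too. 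Combining over all items, $\sum_j x^+_{ij}\beta_i v_{ij} \le \sum_j x_{ij}\beta_i v_{ij} \le B_i$, where the last inequality is budget-feasibility of $\beta$. Since buyer $i$ was arbitrary, $\beta^+$ is budget-feasible.

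The main obstacle is the tie-handling in the previous paragraph: showing that when an item clears at a price buyer $i$ exactly matches, increasing the competing bids never increases $i$'s fractional allocation. This requires careful bookkeeping of how many competitors move strictly above the clearing price versus onto the tie as we pass from $\beta$ to $\beta^+$, and it is where the uniform-price tie-breaking rule must be pinned down precisely. Everything else reduces to monotonicity of order statistics and the budget-feasibility hypothesis.
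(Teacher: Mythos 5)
Your proposal follows the same skeleton as the paper's proof: fix a buyer $i$, assume WLOG that $\beta^+_i=\beta_i$, observe that $i$'s per-unit payment $p_j+\lambda_{ij}=\beta_i v_{ij}$ is unchanged, note that every competing bid weakly increases, and conclude that $i$'s winnings---hence spending---weakly decrease, so budget feasibility transfers from $\beta$ to $\beta^+$. Where you diverge is in the treatment of ties. The paper never analyzes tie-breaking at all: since the definition of paced auction allocation permits arbitrary fractional splits among bidders tied at the $s_j$'th bid, it simply \emph{constructs} a witness allocation for $\beta^+$ by setting $x^+_{ij}=x_{ij}$ whenever $i$ is still among the top $s_j$ bids and $x^+_{ij}=0$ otherwise, exploiting the existential flavor of budget feasibility. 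You instead pin down a deterministic equal-split rule and prove item-wise domination $x^+_{ij}\le x_{ij}$ of the actual auction allocation---a legitimate and more self-contained route, though note it proves the statement for one specific tie-breaking rule, whereas the paper's definition only requires \emph{some} consistent allocation.

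One step in your tie analysis is wrong as stated, though fixable: passing from $\beta$ to $\beta^+$ does \emph{not} weakly enlarge the set of tied bidders---a bidder tied at $p_j$ can move strictly above it, shrinking the tied set, so your ``numerator shrinks, denominator grows'' reasoning does not apply directly. The repair uses an inequality you need anyway: writing $m$ for the number of bids strictly above the price and $t$ for the number tied, the definition of $p_j$ as the $s_j$'th highest bid forces $m < s_j \le m+t$, i.e.\ $s_j - m \le t$. If $u$ tied bidders move strictly above and $w$ bidders rise onto the tie (with the price unchanged; if the price strictly rises, $i$ gets $0$ and there is nothing to prove), then $i$'s equal share moves from $(s_j-m)/t$ to $(s_j-m-u)/(t-u+w)$, and $s_j-m\le t$ yields $(s_j-m-u)/(t-u+w) \le (s_j-m)/t$ even when $w=0$. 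With that repair your argument is complete. It is worth noting that your more careful treatment exposes a detail the paper's own construction glosses over: since different buyers may attain their maximum under different vectors, the paper's retained allocation mixes the witnesses for $\beta$ and $\beta'$ and can over-allocate a tied item (two buyers each retaining a full unit they won under different multipliers); one must scale tied allocations down to supply, which only lowers spending and so does not affect the conclusion.
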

\begin{proof}
  Assume WLOG that $\beta_i$ attains the maximum for $i$ under $\beta^+$. Then $i$ pays the same price $\beta_iv_{ij}$ for any item they win under $\beta^+$. Now let $x^+_{ij}=x_{ij}$ if $i$ is still among the top $s_j$ bids. If $i$ is no longer among the top $s_j$ bids then $x^+_{ij}=0$. Note that $i$ cannot be among the top $s_j$ bids for any item $j$ under $\beta^+$ if they were not among the top $s_j$ under $\beta$, since the bids of $i$ remain the same, and every other bid weakly increased. Thus, $i$ gets an allocation $x^+_i \leq x_i$, and since their personalized price $p_j+\lambda_{ij}$ stays the same, $i$ must remain budget feasible.
\end{proof}

Proposition~\ref{prop:beta monotone} also shows that the maximal BFPM is unique, since they come from a compact space (by Lemma~\ref{lem:beta ub}).

Second, we show that a vector of utility prices is the AMO-EG solution $\beta$ if and only if it is the pointwise maximal BFPM (there is a single pointwise maximal BFPM by compactness of utility prices).
\begin{proposition}
  \label{prop:eg beta maximal}
  A vector of utility prices $\beta$ is a solution to AMO-EG if and only if $\beta$ equals the pointwise max of $\beta$ and $\beta'$ for all BFPM $\beta'$.
\end{proposition}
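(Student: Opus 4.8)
The plan is to prove the proposition by identifying the AMO-EG solution with the unique pointwise-maximal BFPM, whose existence and uniqueness follow from Proposition~\ref{prop:beta monotone} together with compactness of the utility-price space (Lemma~\ref{lem:beta ub}). Writing $\beta^*$ for this maximal BFPM, the statement ``$\beta$ equals the pointwise max of $\beta$ and $\beta'$ for all BFPM $\beta'$'' is exactly the assertion that $\beta$ dominates every BFPM, i.e. that $\beta=\beta^*$. Since the AMO-EG objective pins down the utilities $u_i=v_i\cdot x_i$ uniquely, it also pins down a unique utility-price vector $\beta^{EG}_i=B_i/u_i$; thus it suffices to prove the single equality $\beta^{EG}=\beta^*$, which yields both directions of the iff at once.

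First I would check that $\beta^{EG}$ is itself a BFPM. The optimality conditions \eqref{eq:price_auction_bound} and \eqref{eq:eg_xij_fo} say precisely that under the bids $\beta^{EG}_i v_{ij}$ every winner bids at least the supporting price $p_j$, every loser bids at most $p_j$, and every fractional winner is tied at $p_j$; this is exactly the paced-auction allocation associated to $\beta^{EG}$, and \eqref{eq:budget spent} shows each buyer's spend equals $B_i$. Hence $\beta^{EG}$ is budget-feasible, and since $\beta^*$ dominates every BFPM we get $\beta^{EG}\le\beta^*$ coordinatewise.

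The heart of the argument is the reverse inequality, which I would obtain by showing that the maximal BFPM $\beta^*$ exhausts every budget. Suppose some buyer $i$ spends strictly less than $B_i$ under $\beta^*$, and consider raising $\beta^*_i$ while holding $\beta^*_{-i}$ fixed. The key observation is that in the paced auction each other buyer $\ell$ pays $\sum_j x_{\ell j}(p_j+\lambda_{\ell j})=\sum_j x_{\ell j}\,\beta_\ell v_{\ell j}$, a first-price expression that depends on the clearing prices only through $\ell$'s own allocation; when only $i$'s bids increase, each threshold weakly rises so every $x_{\ell j}$ weakly falls, and thus $\ell$'s spend can only decrease and $\ell$ stays feasible. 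Meanwhile buyer $i$'s spend is nondecreasing in $\beta_i$ and currently below $B_i$, so it remains below $B_i$ for a sufficiently small increase. This produces a BFPM strictly exceeding $\beta^*$ in coordinate $i$, contradicting maximality. Hence every buyer exhausts their budget, so $\beta^*_i=B_i/u^*_i$, and with $v^{\downarrow}>0$ each $u^*_i>0$. Substituting this identity, the winner/loser bid conditions become the stationarity conditions \eqref{eq:price_auction_bound}--\eqref{eq:eg_xij_fo}, the paced auction respects the supply and singleton AMO constraints, and complementary slackness holds (strict winners have $\lambda_{ij}>0$ and are fully allocated, slack supply carries price zero). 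Since \eqref{eq:eg_minmax_multislot} is convex these KKT conditions are sufficient for optimality, so $\beta^*=\beta^{EG}$; together with $\beta^{EG}\le\beta^*$ and uniqueness of both objects this proves the equivalence.

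I expect the budget-exhaustion step to be the main obstacle: one must argue carefully that nudging a single under-spending buyer's multiplier upward cannot break any other buyer's budget. The clean reason is the first-price structure of the payments, so that raising a competitor's bid only shrinks a buyer's winnings and hence its spending, rather than forcing it to pay more. A secondary technical point is handling ties and discontinuous jumps in $u_i(\beta_i)$, which is why it suffices to take an arbitrarily small increase of $\beta^*_i$ that stays below the next jump, and to invoke $v^{\downarrow}>0$ to rule out buyers who win nothing at the maximal multiplier.
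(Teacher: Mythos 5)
Your proposal is correct in outline but takes a genuinely different route from the paper. The paper never establishes that the maximal BFPM exhausts budgets, nor does it verify KKT conditions: for the ``only if'' direction it argues directly at the AMO-EG solution $\beta$ --- if some BFPM $\beta'$ had $\beta'_i>\beta_i$, the pointwise max would again be a BFPM by Proposition~\ref{prop:beta monotone}, under which some buyer's first-price spending strictly exceeds their budget, contradicting the exact budget exhaustion \eqref{eq:budget spent} that holds at the EG optimum --- and the ``if'' direction then comes for free from uniqueness of the maximal BFPM together with guaranteed existence of an AMO-EG solution. You instead prove the converse substantive fact, that the maximal BFPM $\beta^*$ is budget-exhausting and satisfies the optimality conditions of \eqref{eq:eg_minmax_multislot}; this is more work but more informative, re-deriving in the spirit of the first-price pacing-equilibrium literature that $\beta^*$ itself supports an equilibrium rather than inferring it from the EG solution's existence. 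Two repairs are needed in your budget-exhaustion step. First, when the under-spending buyer $i$ is exactly tied at the clearing bid on some item there is no room ``below the next jump''; the fix is to first re-break ties in $i$'s favor (this only lowers other buyers' spending), after which $i$'s spend varies continuously in $\beta^*_i$ from that configuration and your contradiction goes through. Second, writing $\beta^*_i=B_i/u^*_i$ for all $i$ simultaneously requires a single allocation exhausting every budget at once, not merely a per-buyer tie-breaking for each $i$ separately; taking the $\beta^*$-consistent allocation maximizing total spend handles this. With those patches your KKT verification and the uniqueness-of-EG-utilities conclusion are sound; the paper's argument is shorter precisely because budget exhaustion is already given at the EG point by \eqref{eq:budget spent}, so maximality can be checked there instead of at $\beta^*$.
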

\begin{proof}
  First, let $\beta$ be the utility prices associated with a solution $(x,p)$ to AMO-EG.
  We may assume that $p_j$ equals the least bid $\beta_iv_{ij}$ such that $x_{ij}>0$, since otherwise we may increase $p_j$ until this is true, while lowering all $\lambda_{i'j}$ by the corresponding amount; this retains optimality.
  By \ref{eq:price_auction_bound} and \ref{eq:eg_xij_fo} we have that only ``winning'' bids have $x_{ij}>0$.
  Now assume that $\beta$ were \emph{not} maximal, then there exists $\beta'$ s.t. $\beta'_i>\beta_i$ for some $i$. Taking the pointwise max of $\beta'$ and $\beta$ would then yield a new BFPM, but then spending went strictly up for at least one buyer, which is a contradiction since \ref{eq:budget spent} guarantees that each buyer spends their budget exactly at $\beta$. This shows the only if direction.

  We have that a solution $\beta$ associated with AMO-EG is maximal. But by uniqueness of the maximal BFPM and the fact that AMO-EG is guaranteed to have an optimal solution (since it is guaranteed to be feasible), we have that $\beta$ must also be the only maximal BFPM, thus showing the if direction.
\end{proof}

\subsubsection*{Prof of Theorem~\ref{thm:spl}}

We prove Theorem~\ref{thm:spl} using the monotonicity of utility prices proved in Propositions~\ref{prop:beta monotone} and~\ref{prop:eg beta maximal}. In particular, we will show that in the case where a given buyer has sufficiently-many approximate twins, the utility-price monotonicity implies that they cannot change prices by much, and cannot gain much by tailoring their valuations to specific prices. The case where they have a small number of approximate twins goes to zero rapidly.

\begin{proof}
  Let $\beta'$ be the set of utility prices when $i^*$ does not participate in the market. Furthermore, set $\beta'_{i^*}$ equal to the value that uses the whole budget of buyer $i^*$ under the paced auction allocation using $\beta'_{-i^*}$.
  Let $\beta$ be the AMO-EG solution when $i^*$ reports truthfully.

  Let $\mathcal{I}$ be the set of all buyers whose types are in some subset $\Theta' \subseteq \Theta$ where $\theta_{i^*}\in \Theta'$. Let $p_{\Theta'}$ be the probability of sampling a buyer from $\Theta'$. The expected utility that $i^*$ gains from misreporting for a given market size with $n$ buyers can be decomposed as follows:
  \[
    p\left\{|I| \geq \frac{np_{\Theta'}}{2} \right\} \expectation\left[u_i\bigg| |I| \geq \frac{np_{\Theta'}}{2}\right]
      + p\left\{|I| < \frac{np_{\Theta'}}{2} \right\} \expectation\left[u_i\bigg||I| < \frac{np_{\Theta'}}{2}\right].
  \]
  From Lemma~\ref{lem:multinomial} we know that the second term converges exponentially quickly to zero as the market gets large, and so even when $i^*$ receives \emph{all} items the contribution from this term converges to zero. Thus we just need to show that the first term converges to zero.

  Let $\epsilon = \max_{i, i' \in \mathcal{I}, j \in J} |v_{ij} - v_{i'j}|$.
  Now consider any $i \in \mathcal{I}$. We wish to bound $\beta_i$ in terms of $\beta_{i^*}$. The multiplicative bound \eqref{eq:beta mult bound} applies because $\epsilon$ satisfies the condition of Lemma~\ref{lem:beta bounds}.
  We thus have for all $i\in \mathcal{I}$
  \begin{align}
    \label{eq:beta sandwich spl}
    \epsilon^{\downarrow}\beta_{i^*} \leq \beta_i \leq \epsilon^{\uparrow} \beta_{i^*}.
  \end{align}

  Now let $\alpha = \frac{\beta_{i^*}}{\beta_{i^*}'}$, by monotonicity $\alpha \geq 1$. Applying \eqref{eq:beta sandwich spl} to $\beta$ we then have $\beta_i \geq \epsilon^{\downarrow}\beta_{i^*} \geq \alpha \epsilon^{\downarrow} \beta_{i^*}'$. The total increase in payments (where payments include $p$ and $\lambda$) can thus be lower bounded based on increases over $\mathcal{I}$ as follows
  \begin{align}
    \label{eq:ic payment increases}
    B_i = 1 = \sum_{i,j} x_{ij} (p_j + \lambda_{ij}) - \sum_{i,j} x_{ij}' (p_j' + \lambda_{ij}')
    &\geq
    \sum_{i \in \mathcal{I},j} x_{ij} (\beta_i v_{ij} -  \beta_i' v_{ij})\\
    &\geq
    \sum_{i \in \mathcal{I},j} x_{ij} (\alpha\epsilon^{\downarrow}\beta_{i^*}' v_{ij} -  \epsilon^{\uparrow}\beta_{i^*}' v_{ij})\\
    &=
    \beta_{i^*}' \sum_{i \in \mathcal{I},j} x_{ij} v_{ij} (\alpha\epsilon^{\downarrow} -  \epsilon^{\uparrow} )\\
    & \geq 
    (\alpha\epsilon^{\downarrow} -  \epsilon^{\uparrow} ) \beta_{i^*}' |\mathcal{I}| \frac{kNv^{\downarrow}}{N} \\
    & =
    (\alpha\epsilon^{\downarrow} -  \epsilon^{\uparrow} ) \beta_{i^*}' |\mathcal{I}| kv^{\downarrow}.
  \end{align}
  The first inequality follows because prices are monotonically increasing, and we know that for any $j$ s.t. $x_{ij}>0$ for some $i\in \mathcal{I}$ we have that either 1) $i$ is winning at least $x_{ij}$ of $j$ under $\beta$ in which case the price increase on $x_{ij}$ is exactly as stated, or $i$ is not winning $x_{ij}$ under $\beta_i$, in which case $\beta_iv_{ij}\leq p_j$, and so someone else is paying at least $\beta_iv_{ij}$ for the fraction $x_{ij}$ of $j$.
  The second inequality follows from \eqref{eq:beta sandwich}.
  The third inequality follows from our utility lower bound \eqref{eq:u lb} adapted to the low rank AMO setting.

  It follows that the the utility price increase $\alpha$ can be bounded as
  \[
     \alpha\epsilon^{\downarrow} -  \epsilon^{\uparrow} \leq \frac{1}{\beta_{i^*}' |\mathcal{I}| k v^{\downarrow} }.
     \Leftrightarrow
     \alpha
     \leq \frac{1}{\epsilon^{\downarrow}\beta_{i^*}' |\mathcal{I}| k v^{\downarrow} } +  \frac{\epsilon^{\uparrow}}{\epsilon^{\downarrow}}.
  \]
  We know that $\beta_{i^*}$ is bounded from below independent of market size as per Lemma~\ref{lem:beta ub}. 

  For any particular $\Theta'$, the expected size of the corresponding set $\mathcal{I}$ grows linearly in market size, and thus the likelihood that a constant fraction of the expected size is not reached decreases exponentially in market size, as shown in Lemma~\ref{lem:multinomial}.
It follows that for any $\alpha > \frac{\epsilon^{\uparrow}}{\epsilon^{\downarrow}}$ it gets driven to $\alpha \leq \frac{\epsilon^{\uparrow}}{\epsilon^{\downarrow}}$ as the market gets large.
  But we are free to pick our bounding low-rank set $\Theta'$ as small as we wish around $\theta_i$, and thus we can make $\frac{\epsilon^{\uparrow}}{\epsilon^{\downarrow}}$ arbitrarily small by picking a sufficiently-small $\Theta'$.

The above shows that in the case where $|I| \geq \frac{np_{\Theta'}}{2}$, as the market gets large $i^*$ cannot improve their utility by manipulating prices and then buying according to an optimal single utility price that decides their bids. It remains to show that when $|I| \geq \frac{np_{\Theta'}}{2}$ the difference between the utility of buying at an optimal single utility price, versus manipulating individual valuations such that $i^*$ can buy at prices $p$, is small.
But we already know from Corollary~\ref{cor:regret bound} that each buyer $i\in \mathcal{I}, i \ne i^*$ has bounded regret under $\beta'$, since $|I| \geq \frac{np_{\Theta'}}{2}$ guarantees that the corollary applies for  $n > \frac{2s}{np_{\Theta'}}$. Furthermore, since the utilities $v_i, v_{i^*}$ satisfy $\|v_i - v_{i^*}\|_{\infty} \leq \epsilon$ and the $\ell_1$ length of the allocation vectors $x_i,x_{i^*}$ are bounded by $\frac{u^{\uparrow}_i}{v^{\downarrow}}$ by Lemma~\ref{lem:bounded total allocation}, we know that buyer $i^*$ prefers their best allocation under $\beta'$ over that of buyer $i$ by at most $2\epsilon \frac{u^{\uparrow}_i}{v^{\downarrow}}$. Combining these bounds we find that under the prices from $\beta'$, buyer $i^*$ can at most improve their utility by buying at prices $p$, as opposed to via a single utility-price, by
\[
  2\epsilon \frac{u^{\uparrow}_i}{v^{\downarrow}}
  + \frac{u_i^{\uparrow}}{v^{\downarrow}}\max_k \beta_i^{-1} \left(  \eps_{\beta}v_{ik} + \beta_{i'}\epsilon\right).
\]
Since $\epsilon$ and $\epsilon_{\beta}$ both get smaller as we make $\Theta'$ smaller we get that this gain can be made arbitrarily small by picking $\Theta'$ sufficiently small, and picking a correspondingly-large $n$ such that $|I| \geq \frac{np_{\Theta'}}{2}$.
\end{proof}

\end{document}